\newtheorem*{Thm}{Theorem}
\newtheorem{lemma}{Lemma}
\newtheorem{prop}{Proposition}
\newtheorem*{Cor}{Corollary}
\theoremstyle{definition}
\newtheorem*{Remark}{Remark}
\newtheorem*{definition}{Definition}
 \newlength\headseptemp
\newcommand{\Hm}[1]{\leavevmode{\marginpar{\tiny%
$\hbox to 0mm{\hspace*{-0.5mm}$\leftarrow$\hss}%
\vcenter{\vrule depth 0.1mm height 0.1mm width \the\marginparwidth}%
\hbox to 0mm{\hss$\rightarrow$\hspace*{-0.5mm}}$\\\relax\raggedright
#1}}}
\newcommand{\T}{{\mathbb T}}
\newcommand{\Z}{{\mathbb Z}}
\newcommand{\R}{{\mathbb R}}
\newcommand{\C}{{\mathbb C}}
\newcommand{\h}{{\mathbb H}}
\newcommand{\N}{{\mathbb N}}
\newcommand{\EE}{{\mathbb E}}
\newcommand{\PP}{{\mathbb P}}
\newcommand{\A}{{\mathcal A}}
\newcommand{\W}{{\mathcal W}}
\newcommand{\Lp}{{\Delta}}
\newcommand{\Leb}{\mathrm{Leb}}
\newcommand{\ka}{{\kappa}}
\newcommand{\al}{{\alpha}}
\newcommand{\be}{{\beta}}
\newcommand{\de}{{\delta}}
\newcommand{\Gm}{{\Gamma}}
\newcommand{\gm}{{\gamma}}
\newcommand{\ph}{{\varphi}}
\newcommand{\lm}{{\lambda}}
\newcommand{\te}{{\theta}}
\newcommand{\Te}{{\Theta}}
\newcommand{\eps}{{\varepsilon}}
\newcommand{\si}{{\sigma}}
\newcommand{\ap}[1]{\left( #1\right)}
\newcommand{\ab}[1]{\left( #1\right)}
\newcommand{\as}[1]{\langle #1\rangle}
\newcommand{\ov}[1]{\overline{#1}}
\newcommand{\ow}[1]{\widetilde{#1}}
\newcommand{\mo}[1]{\left\vert #1\right\vert}
\let\Im\undefined
\DeclareMathOperator{\Im}{Im}
\begin{document}
\title[AC spectrum for multi-type Galton Watson trees]{Absolutely continuous spectrum for multi-type Galton Watson trees}
\author{Matthias Keller}
\address{Mathematisches Institut, Friedrich Schiller Universit\"at Jena,
  D-07743 Jena, Germany, m.keller@uni-jena.de, http://www.analysis-lenz.uni-jena.de/Team/Matthias+Keller.html}
\keywords{}
\subjclass[2000]{}

\begin{abstract}We consider multi-type Galton Watson trees that are close to a tree of finite cone type in distribution. Moreover, we impose that each vertex has at least one forward neighbor. Then, we show that the spectrum of the Laplace operator exhibits almost surely a purely absolutely continuous component which is included in the absolutely continuous spectrum of the tree of finite cone type.
\end{abstract}

\maketitle
\section{Introduction and main result}

After the seminal work of Klein \cite{Kl1,Kl3} there has been a lot of effort in the recent years to develop various techniques to show preservation of absolutely continuous spectrum for random operators on tree like graphs,  see \cite{ASW1,AW,FHH,FHS2,FHS3,Hal,Kel,KLW2,KlS1,KlS2}. While most of the work is concerned with diagonal  perturbations by a random potential, we consider a randomization of the geometry.

In particular, we study the absolutely continuous spectrum of the discrete Laplace operator on a multi-type Galton Watson tree. We show that parts of the spectrum are almost surely purely absolutely continuous, whenever the  distribution is close to a deterministic one. Trees associated to deterministic distributions are called trees of finite cone type or periodic trees. The spectrum of the Laplacian on such deterministic trees consists of  finitely many bands of  purely absolutely continuous spectrum, see \cite{Kel,KLW1}.
An important assumption on the distribution of the random trees is that each vertex has at least one forward neighbor. Without this assumption it can be   easily seen that the operator exhibits eigenvalues with finitely supported eigenfunctions that are spread all over the spectrum, compare \cite{V} for a related result on $\Z^{d}$.

The result of this paper stands in clear contrast to results for trees with radial symmetric branching. In \cite{BF} it is shown among other things, that such a tree has absolutely continuous spectrum if and only if the branching function is eventually constant. Hence, if the branching function is random, then the spectrum is almost surely singular. In \cite{HP}, Anderson localization was shown for a corresponding model on quantum graphs. These results rely on the fact that the models are in some sense one dimensional by the radial symmetry.

A similar contrast already occurs for random Schrödinger operators on regular trees. While for small disorder induced by a random potential large parts of the absolutely continuous spectrum are preserved,  \cite{ASW1,AW,FHS2,KLW2, Kl1,Kl3}, the spectrum is almost surely  pure point whenever the disorder is induced by an arbitrary small radial symmetric random potential,  \cite{ASW1}.

Our model is closely related to what is studied in the physics literature under the name quantum percolation. There the randomness of the geometry comes from deleting edges of a given graph independently with some fixed probability. For trees this can be also modeled by a branching process. In particular, the distribution of those branching processes is such that the vertex degree is bounded and vertices have zero forward neighbors with positive probability. While in classical percolation one asks for existence of an infinite cluster, see e.g. \cite{Ly}, in quantum percolation one is concerned with  the transition from localized states to extended states, see e.g. \cite{H1,H2}. In \cite{H1} an asymptotic formula  for a critical probability is derived at which the transition is supposed to happen.

The question of quantum percolation has also attracted some attention in the mathematics community over the past years. For bond percolation on $\Z^{d}$, spectral properties of the Laplacian and spectral asymptotics in the sub- and supercritical regime can be found in \cite{KM,MS1}.
In \cite{V}, the set of eigenvalues with compactly supported eigenfunctions and the integrated density of states of Anderson Hamiltonians for site percolation graphs of $\Z^{^d}$ are studied. Related questions for amenable Cayley graphs are considered in \cite{AV}. A nice introductory survey and a discussion of recent developments can be found \cite{MS} (for further references see also therein).
It is also interesting to note the connection of the spectrum in the subcritical percolation regime and the spectrum of certain lamplighter groups, see \cite{LW}.

\subsection{The model}

Let $T$ be a locally finite rooted tree with root $o$. We do not distinguish in notation between the tree and its vertex set. Likewise we do not distinguish between subgraphs and subsets of vertices.
Let $\ell^{2}(T)$ be the Hilbert space of square summable complex valued functions on the vertices. We study the Laplace operator $\Lp=\Lp({T})$ on
\begin{align*}
    D(\Lp)=\{\ph\in\ell^{2}(T)\mid (x\mapsto \sum_{y\sim x}(\ph(x)-\ph(y)))\in\ell^{2}(T)\}
\end{align*}
acting as
\begin{align*}
    \Lp\ph(x)=\sum_{y\sim x}(\ph(x)-\ph(y)).
\end{align*}
The operator $\Lp$ is selfadjoint and its restriction to the functions of finite support is essentially selfadjoint, see \cite{Woj,KL}. Note that the operator is bounded if and only if the vertex degree is  bounded.

Let  $\A$ be a finite set. We refer to its elements as \emph{labels} or \emph{vertex types}. Let $b$ be a  multi-type Galton Watson branching process with types in $\A$. For background on branching processes see the monographs \cite{AN,H0}. For the sake of brevity, we will speak in the following only of  branching processes meaning  multi-type Galton Watson branching processes.
For a vertex type $j\in\A$ let  $\Te_j=\Te_{j}^{(b)}$ be the subset of rooted trees that are realizations of the process $b$ with root of vertex type $j$. Each tree $\te_{j}\in\Te_j$ is equipped with a \emph{labeling function} $a_j:\te_{j}\to\A,$
assigning to a vertex in $\te_{j}$ its vertex type.
We let $\PP_{j}=\PP_{j}^{(b)}$ be the probability measure on $\Te_j$ induced by $b$.
Furthermore, let $(\Te,\PP)=(\Te^{(b)},\PP^{(b)}) =(\prod_{j\in\A}\Te_j,\bigotimes_{j\in\A}\PP_{j})$. A labeling function $a$ on $\te=(\te_{j})\in\Te$ is defined via its restrictions as
$$a:\te\to\A,\quad a\vert_{\te_j}=a_{j}.$$
We impose the following assumption on the realizations of the branching processes:
\begin{itemize}
  \item [(F)] Each vertex  has at least one forward neighbor.
\end{itemize}
If the realizations of a branching process $b$ satisfy (F) almost surely, then we say $b$ satisfies (F). In particular, this assumption guarantees non extinction of the tree.

For $\te=(\te_{j})\in\Te$, we let $\Lp({\te})$ on $\ell^{2}(\te)$ be the direct sum of the operators $\Lp({\te_{j}})$ on $\ell^{2}(\te_j)$, $j\in\A$.

Let us introduce a distance for multi-type Galton Watson branching processes. We call a function $s:\A\to\N_0$ a \emph{configuration} of forward neighbors of a vertex, i.e.,  $s(k)$ is the number of forward neighbors  with label $k$. Moreover, we denote  $\|s\|=\sum_{k\in\A}s(k)$,  $s\in \N_{0}^{\A}$, which is the  total number of forward neighbors. For a branching process $b$, denote by $\PP^{(b)}_j(s)$ the probability that the configuration of the forward neighbors of a vertex of type $j$ is given by $s\in \N_{0}^{\A}$.
For $p\geq1$, let
$$\W_{p}:=\{b \mbox{ branching process that satisfies (F) and } \max_{j\in\A} \sum_{s\in \N_{0}^{\A}}\PP^{(b)}_{j}(s)\|s\|^{p}<\infty\}.$$
We let $d_{p}:\W_{p}\times \W_{p}\to [0,\infty)$ be the metric given by
$$d_{p}(b_1,b_2)=\max_{j\in\A} \sum_{s\in \N_{0}^{\A}} \big|\PP_{j}^{(b_{1})}(s)-\PP_{j}^{(b_{2})}(s)\big|\|s\|^{p}.$$
If $\inf\{n\in\N \mid \PP_{j}^{(b)}(s)=0\mbox{ for all } s\in \N_{0}^{\A}\mbox{ with } \|s\|> n\mbox{ and all }j\in\A\} <\infty,$ then we say that $b$ has \emph{bounded branching}.

Next, we introduce the deterministic trees to which we compare the random trees. Let a matrix
\begin{align*}
    M:\A\times\A\to\N,\quad (j,k)\mapsto M_{j,k}
\end{align*}
be given. We assume that if $\A$ consists of only one element, then $M$ satisfies $M\geq2$. This condition is necessary in order to exclude the one dimensional situation.
This gives rise to a deterministic branching process $b=b_{M}$ by letting the probability that a vertex of type $j$ has $M_{j,k}$ neighbors of type $k$ be exactly one for all $j,k\in\A$. Then, $\Te^{(b_M)}$ consists of only one element $\T=(\T_{j})_{j\in\A}$. We say the trees $\T_j$ are generated by the substitution matrix $M$ over $\A$. These trees are often referred to as \emph{trees of finite cone type} or \emph{periodic trees}, see \cite{Ly,NW}. As above, $\Lp(\T)$ denotes the direct sum of operators $\bigoplus_{j\in\A}\Lp(\T_{j})$  on  $\ell^{2}(\T) =\bigoplus_{j\in\A}\ell^{2}(\T_{j})$.

The operators $\Lp({\T_{j}})+\be_{o(j)}$, where $\be_{o(j)}=\as{\cdot,\de_{o(j)}}\de_{o(j)}$ and $\de_{o(j)}$ is the delta function of the root $o(j)$, have purely absolutely continuous spectrum which consists of finitely many intervals. For a proof for the case of the adjacency matrix see \cite{KLW1}  and for the general case of label invariant operators see \cite{Kel}.
In particular, positivity of the entries of  $M$ ensures that the spectrum is independent of the label of the root. Moreover, in \cite{KLW1} it is also proven  that the densities of the spectral measure are continuous outside of a finite set $\Sigma_0$.
The operator $\be_{o(j)}$ can be considered as a boundary condition at the root. By the general theory of rank one perturbations, see \cite{Si}, the absolutely continuous spectrum of the operators $\Lp(\T_{j})$ and $\Lp({\T_{j}})+\be_{o(j)}$ coincides.


\subsection{Main result}
The  absolutely continuous spectrum of an operator $H$ is denoted by $\si_{\mathrm{ac}}(H)$.
We will prove the following theorem.

\begin{Thm}\label{main} There exists a finite set $\Sigma_0\subset\R$ such that for all compact $I\subseteq \si_{\mathrm{ac}}(\Lp({\T}))\setminus\Sigma_0$ and $p>1$ there is $\de=\de(I,p)>0$  such that  for all $b\in\W_p$ with $d_{p}(b,b_M)<\de$  the spectrum of $\Lp({\te})$ is purely absolutely continuous in $I$ for almost all $\te\in\Te^{(b)}$.
\end{Thm}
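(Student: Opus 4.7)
The plan is to analyze the Green's functions of the forward subtrees via the standard tree recursion, and then to propagate the deterministic absolutely continuous behavior through the Galton--Watson perturbation by a contraction argument in a hyperbolic metric. For each vertex $x$ of a realization $\te$, let $\te_x$ be the forward subtree rooted at $x$ and set
$$\Gamma_x(z) := \as{\de_x,(\Lp(\te_x) + \be_x - z)^{-1}\de_x},\qquad \Im z > 0,$$
where $\be_x = \as{\cdot,\de_x}\de_x$ is the rank one perturbation at $x$ that compensates for the missing parent of $x$ in $\te_x$. A Schur complement computation yields the closed recursion
$$\Gamma_x(z)^{-1} = (\deg_{\te_x}(x) + 1) - z - \sum_{y \text{ forward of } x}\Gamma_y(z),$$
with $\deg_{\te_x}(x) \geq 1$ by (F). Conditional on the type $a(x) = j$, the distribution $\nu^{(b)}_j(z)$ of $\Gamma_x(z)$ depends only on $j$. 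For the deterministic process $b_M$ this law is a Dirac mass at the solution $\Gamma_j(z)$ of the finite system
$$\Gamma_j(z)^{-1} = \sum_{k \in \A} M_{j,k}(1 - \Gamma_k(z)) + 1 - z,\qquad j \in \A.$$

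The first step is to recall from \cite{KLW1,Kel} that for $E$ in any compact $I \subseteq \si_{\mathrm{ac}}(\Lp(\T))\setminus\Sigma_0$, the boundary values $\Gamma_j(E + i0)$ exist, are continuous in $E$, and satisfy $\Im \Gamma_j(E + i0) \geq c_I > 0$ uniformly for $E \in I$ and $j \in \A$. Linearizing the deterministic recursion at this fixed point and using an appropriate hyperbolic metric on $\C_+^{\A}$, the map is a strict contraction with constant $\ka_I < 1$ on a neighborhood of $(\Gamma_j(z))_{j\in\A}$, uniformly for $\Re z \in I$ and $\Im z > 0$, with estimates extending continuously to $\Im z = 0$. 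The second step is to lift this to the distributional recursion associated to a random $b \in \W_p$. The tuple $(\nu^{(b)}_j(z))_j$ is a fixed point of
$$\Phi^{(b)}_z(\mu)_j := \mbox{law of } \Bigl(\|s\| + 1 - z - \sum_{k \in \A} \sum_{i=1}^{s(k)} W_{k,i}\Bigr)^{-1},$$
with $s \sim \PP^{(b)}_j$ and $W_{k,i} \sim \mu_k$ independent. In a $p$-Wasserstein type metric built from the hyperbolic distance, the moment assumption defining $\W_p$ together with $d_p(b,b_M) < \de$ small enough allows one to show that $\Phi^{(b)}_z$ is a strict contraction in a ball around the deterministic tuple: the discrepancy $\Phi^{(b)}_z - \Phi^{(b_M)}_z$ is controlled by $d_p(b,b_M)$ via a Minkowski inequality whose constant is polynomial in the $p$-th moments of the reference measures, and the exponent $p > 1$ provides exactly the integrability surplus needed to close the estimate against the random sum length $\|s\|$.

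The principal obstacle is the uniform control of these estimates up to the real axis and their conversion into a spectral statement. I would handle this by truncating $\te$ at generation $n$, propagating the contraction factor $\ka_I^n$ from depth $n$ back to the root, and using the Galton--Watson independence of the sub-processes hanging off distinct vertices at generation $n$ to close the $L^p$ estimates on the independent sums. Letting $n \to \infty$ yields $\EE[(\Im \Gamma_o(E+i0))^{-1}] < \infty$ and $\EE[|\Gamma_o(E+i0)|^p] < \infty$ uniformly for $E \in I$, so that almost surely $\Im \Gamma_o(E + i0) \in (0,\infty)$ for Lebesgue almost every $E \in I$. Since $\Gamma_o(z)$ is the Borel transform of the spectral measure of $\Lp(\te) + \be_o$ at $\de_o$, and $\si_{\mathrm{ac}}$ is invariant under the rank one perturbation $\be_o$ as recalled in the introduction, the $\de_o$-spectral measure of $\Lp(\te)$ is purely absolutely continuous on $I$ almost surely. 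A covering argument using the cyclicity of the finitely many root delta functions in the spectral subspace over $I$ then upgrades this to pure absolute continuity of $\Lp(\te)$ in $I$ for almost every $\te \in \Te^{(b)}$.
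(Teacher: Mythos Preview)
Your proposal has the right architecture --- forward Green function recursion, hyperbolic-metric contraction, perturbation from $b_M$ to $b$ --- but two steps do not go through as written.

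\textbf{The contraction is not local.} You obtain a strict contraction ``on a neighborhood of $(\Gamma_j(z))_j$'' by linearization, and then a contraction of $\Phi^{(b)}_z$ ``in a ball around the deterministic tuple''. But to use this you would already need to know the random Green functions lie in that ball, which is the conclusion you are after; the argument is circular. The paper's contraction is \emph{global} on $\h^{S(\T_j)}$, and obtaining it is the technical heart of the proof: the one-step recursion is \emph{not} strictly contractive in $\gamma$ (the coefficients $c_x$ satisfy only $|c_x|\le 1$). Strict contraction $\kappa_j^{(p)}<1$ appears only after a \emph{two-step} expansion and an averaging over label-invariant permutations of $S(\T_j)$ (Proposition~\ref{p:ka}), which is admissible because the $\Gamma_x^{\te}$ for $x\in S(\T_j)$ with the same label are i.i.d.\ under $\PP_j(\,\cdot\mid[\T_j])$. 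The passage from $b_M$ to $b$ is then handled by conditioning on whether $[\te_j]=[\T_j]$: on that event the global contraction applies; on its (low-probability) complement a coarser two-step bound with multiplicative and additive errors in $|S_{o(j)}^{[\te]}|^p+|S_{o'(j)}^{[\te]}|^p$ suffices, and the $p$-th moment hypothesis in $\W_p$ together with $d_p(b,b_M)<\de$ absorbs these errors. The outcome is the vector inequality $\EE\gamma\le(1-\eps)P\EE\gamma+C$ with $P$ stochastic irreducible, solved via Perron--Frobenius.

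\textbf{The cyclicity claim fails.} Your final step asserts that the root delta functions are cyclic in the spectral subspace over $I$. This is false in general: whenever a vertex has two forward subtrees that are isomorphic (which for $b=b_M$ happens everywhere, and for $b$ close to $b_M$ can happen with positive probability), the corresponding automorphism produces anti-invariant vectors orthogonal to the cyclic subspace of $\de_o$. The paper bypasses cyclicity entirely: from the bound on $\EE_k(\gamma(\Gamma_{o(k)}^{(\cdot)},\Gamma_k)^p)$ it derives, for every $n$, a uniform bound on $\EE\bigl(|S_j^{(\cdot)}(n)|^{-1}\sum_{x\in S_j^{(\cdot)}(n)}|G_x(z,\Lp(\cdot))|^p\bigr)$ over $z\in I+i(0,1]$ (Lemma~\ref{l:EEG}), by re-rooting at $x$ and iterating the one-step estimate back to sphere $n+1$. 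This gives, almost surely and for \emph{every} vertex $x\in\te_j$, the bound $\liminf_{\eta\downarrow 0}\int_I|G_x(E+i\eta,\Lp(\te))|^p\,dE<\infty$, and the limiting absorption principle then makes each $\mu_x$ purely absolutely continuous on $I$. Since the family $\{\de_x:x\in\te\}$ is total in $\ell^2(\te)$, pure absolute continuity on $I$ follows.
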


Indeed, $\Sigma_{0}$ is the finite set mentioned above.
If we restrict ourselves to the case of bounded  branching, then we obtain the following immediate corollary.

\begin{Cor}\label{c:main} There exists a finite set $\Sigma_0$ such that for all compact $I\subseteq \si_{\mathrm{ac}}(\Lp({\T}))\setminus\Sigma_0$ there is $q\in(0,1)$ such that for all $b\in\W_{p}$ with bounded branching where the probability for a vertex of type $j$ to have $M_{j,k}$  forward neighbors of type $k$ is larger than $q$ for all $j,k\in\A$, it follows that the spectrum of $\Lp({\te})$ is almost surely purely absolutely continuous in $I$.
\end{Cor}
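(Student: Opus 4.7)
The plan is to deduce the corollary directly from the main theorem by showing that its hypotheses make $b$ arbitrarily close to $b_M$ in the $d_p$ metric. First I would note that $b_M$ is degenerate: denoting by $s_j^{*}\in\N_0^{\A}$ the configuration with $s_j^{*}(k)=M_{j,k}$, the distribution $\PP_j^{(b_M)}$ is the unit point mass at $s_j^{*}$, so $\PP_j^{(b_M)}(s_j^{*})=1$ and $\PP_j^{(b_M)}(s)=0$ for $s\ne s_j^{*}$.

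Next I would fix any $p>1$; since $b$ has bounded branching, automatically $b\in\W_p$. Let $N<\infty$ be a common upper bound for $\|s\|$ over the supports of all $\PP_j^{(b)}$, $j\in\A$, and for all $\|s_j^{*}\|$; such an $N$ exists by the bounded branching assumption together with the finiteness of $\A$. Splitting the defining sum according to whether $s=s_j^{*}$ or not gives
\begin{align*}
\sum_{s\in\N_0^{\A}}\big|\PP_j^{(b)}(s)-\PP_j^{(b_M)}(s)\big|\|s\|^p = \big(1-\PP_j^{(b)}(s_j^{*})\big)\|s_j^{*}\|^p + \sum_{s\ne s_j^{*}}\PP_j^{(b)}(s)\|s\|^p.
\end{align*}
Under the assumption $\PP_j^{(b)}(s_j^{*})\ge q$, the first term is at most $(1-q)N^p$, and the second is at most $N^p\sum_{s\ne s_j^{*}}\PP_j^{(b)}(s)\le N^p(1-q)$. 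Taking the maximum over $j\in\A$ yields $d_p(b,b_M)\le 2N^p(1-q)$.

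With this bound, the main theorem supplies $\delta=\delta(I,p)>0$, and I would then choose $q\in(0,1)$ close enough to $1$ that $2N^p(1-q)<\delta$. This forces $d_p(b,b_M)<\delta$, and the main theorem delivers almost sure pure absolute continuity of $\Lp(\te)$ in $I$ for $\te\in\Te^{(b)}$, as claimed.

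There is no genuine obstacle: the corollary is essentially a combinatorial rewriting of the smallness hypothesis $d_p(b,b_M)<\delta$ under the degenerate form of $\PP_j^{(b_M)}$. The only minor point worth flagging is that the threshold $q$ depends on the branching bound $N$, which should be read as part of the data specifying the class of admissible processes $b$.
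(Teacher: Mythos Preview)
Your proposal is correct and is precisely the ``immediate'' derivation the paper intends: the paper gives no separate proof of the corollary, simply noting that restricting to bounded branching makes it an immediate consequence of the main theorem. Your observation that the threshold $q$ necessarily depends on the branching bound $N$ is apt, and one minor reading point is that the hypothesis ``probability $\ldots$ larger than $q$ for all $j,k$'' is most naturally a statement about the marginals, from which the bound on $\PP_j^{(b)}(s_j^{*})$ follows via a union bound over $k\in\A$ rather than directly; this changes nothing in your estimate beyond an extra factor of $|\A|$.
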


\begin{Remark}

(a) A special case of our model is the binary tree, where one deletes for each vertex one of the forward neighbors with small probability. In \cite{FHS3}, a proof for preservation of absolutely continuous spectrum is sketched for this case. However, our model does not only allow for a finite number of vertex types, but it is also not restricted to the removal of edges. In particular, the distributions considered in the main theorem do not even assume a bound on the number of forward neighbors of a vertex.

(b) The validity of the theorem does not depend on the choice of the operator $\Delta$. Indeed, the result holds for any selfadjoint label invariant operator in the sense of \cite{Kel}. In particular, it is also true for the normalized Laplacian or the adjacency matrix. However, selfadjointness of the  adjacency matrix is an issue if the vertex degree is unbounded. Nevertheless, in the setting of the corollary  the adjacency matrix is a bounded operator and therefore selfadjoint.

(c) All our estimates are explicit. So, one can get upper bounds for the critical probability  closely related to quantum percolation on a $K$-regular tree.
Let us be more specific: Consider  a $K$-regular tree and let $p\in(0,1)$. For each vertex delete every forward neighbor except for one with  probability $(1-p)$ each. We consider the adjacency matrix at zero energy. We can give upper bounds on the critical probability $p_{K}$  such that for $p>p_K$ some absolutely continuous spectrum is  preserved almost surely.
Since the focus of this paper is primarily on a qualitative result, the estimates are certainly not optimal. In particular, the estimates as given in the proofs yield an upper bound $\sqrt[K-1]{1-2^{-32}K^{-22}/(2K-1)!}$ (nevertheless, with some minor adaptions, see the remark after Proposition~\ref{p:ka}, one can get an upper bound $p_{K}\leq\sqrt[K-1]{1-2^{-33}K^{-22}}$). Since our upper bound tends to $1$ as $K\to\infty$ it does not give information about the behavior expected by physicists. In particular,  \cite{H1} derives an asymptotic formula for the critical probability $p_{K}\sim a/K$, $a\approx 1.42153$,  for the quantum percolation model, where the transition from localized to extended states is supposed to take place.

(d) The theorem deals only with a purely absolutely continuous component of the spectrum. It would be very interesting to know more about the spectrum as a set and the nature of the whole spectrum. Since we deal with rooted trees, the model is not ergodic, so, already non-randomness of the spectrum as a set is an issue.  For the spectrum as a set it is very likely that it stands in a close relationship to the union of  the spectra of trees of finite cone type that are realizations of the distribution. For the nature of the spectrum, one might expect from the perspective of random Schrödinger operators on trees, see \cite{AW}, far more absolutely continuous spectrum than the one of the deterministic tree. Indeed, considering the results of \cite{AW},  one might ask whether singular spectrum  can be excluded almost surely whenever the distributions is close enough to a deterministic one.
\end{Remark}


\subsection{Outline of the proof}

The proof of the theorem is based on techniques developed in \cite{Kel,KLW2} to treat random  operators on trees of finite cone type. These techniques itself are inspired by ideas of \cite{FHS1,FHS2} using hyperbolic geometry and a fix point analysis in order to get bounds on moments of the Green functions. The fundamental difference to the situation in this work is that \cite{Kel,KLW2} deal with small random perturbations appearing everywhere while here we deal with very large perturbations (removal and addition of edges) that occur with small probability.

The Green functions satisfy a recursion relation which we use this to compare the Green functions of the random tree to the ones of the deterministic tree.  Our aim is to bound the difference in the expected value in terms of a function $\gm$ which is related to the standard hyperbolic metric of the upper half plane $\h$ and is given by
\begin{align*}
    \gm(\xi,\zeta)=\frac{|\xi-\zeta|^{2}}{\Im \xi\Im \zeta},\qquad \xi,\zeta\in\h.
\end{align*}

We obtain this bound of the distance of the Green functions by proving a vector inequality: Denote by $\EE\gm$ the vector in $[0,\infty)^{\A}$ where the $j$-th component is the expected value of the $\gm$-distances for the root with label $j$. We prove the vector inequality
\begin{align*}
    \EE\gm\leq(1-\de) P\EE\gm+C,
\end{align*}
with a stochastic irreducible matrix $P$, $\de>0$ and a vector $C$ with positive entries.
For the precise statement see Lemma~\ref{l:VI} in Section~\ref{s:VI}.
The idea is to condition first on the event that (a subset of) the first two spheres of the random tree agrees with the deterministic tree.  By assumption, this event occurs with large probability. For this case, we apply a two step expansion estimate and a uniform contraction estimate from \cite{KLW2}. For the other cases, we prove a different two step expansion estimate, where we use an estimate for  additive perturbations in one argument of $\gm$. This estimate takes the role of the triangle inequality which does not hold for $\gm$. At the end it turns out that the error term from this estimate is compensated by the low probability of these events.

Having the vector inequality above, we conclude by the Perron Frobenius theorem that each component of $\EE \gm$ is bounded. We finish the proof by arguments similar to the ones found in \cite{FHS2,KLW2} and a variant of the limit absorbtion principle.

The paper is structured as follows. In the next section, we start with some preliminaries. In Section~\ref{s:estimates} we provide the crucial estimates such as the two step expansion estimate, the uniform contraction estimate and the vector inequality. These inequalities are used in Section~\ref{s:proof} to prove the main theorem.


\section{Preliminaries}

In this section we recall some basic facts about the Green function on trees. Moreover, we introduce an equivalence relation for the random trees that is determined by the branching in the first two spheres. Finally, we consider conditioned expectations with respect to the equivalence classes and related invariance properties of the Green function.

Let us start with some remarks about notation.
The roots of the components $\te_{j}$ of $\te\in\Te$ are denoted by $o(j)$. By $o'(j)$, we denote a fixed forward neighbor of $o(j)$ of label $j$ whenever such a vertex exists. We denote general rooted trees by $T$ and their roots by $o$. Moreover, we assume that $T$ is always equipped with a labeling function $a$.
We let $\h=\{z\in\C\mid\Im z>0\}$ be the complex upper half plane and always let $z$ be an element of $\h$ which decomposes as $z=E+i\eta$ with $E\in\R$ and $\eta>0$. Moreover, $|\lm|$, for a complex number $\lm$, denotes the modulus of $\lm$, while $|A|$, for a finite set $A$, denotes the cardinality of $A$.


\subsection{Basic facts about the Green function}\label{ss:basic}
Let $\de_{x}$ be the function on a rooted tree $T$ that gives value one to the vertex $x$ and zero to all other vertices and let $\mu_{x}$ be the spectral measure of $\Lp(T)$ with respect to $\de_{x}$. Then, the Green function $z\mapsto G_{x}(z,\Lp(T))$ on $\h$ is given by the Borel transform of $\mu_{x}$, i.e.,
\begin{align*}
    G_{x}(z,\Lp(T)):=\int_{\si(\Lp(T))}\frac{1}{t-z}d\mu_{x} =\as{\de_{x},(\Lp(T)-z)^{-1}\de_{x}}, \quad z\in\h.
\end{align*}
It is well known that the Green function is analytic and maps $\h$ to $\h$ (for background see e.g. \cite{DK}).

As there is a natural ordering of the vertices in a rooted tree given by their distance to the root, there is a natural notion of forward neighbors of vertices. In particular, we denote by $T_{x}$ the forward tree of $x$ and we denote the Green function of the operator $\Lp(T_{x})+\be_{x}$ on $\ell^{2}(T_{x})$ by
\begin{align*}
    \Gm_{x}(z,\Lp(T)):=G_{x}(z,\Lp(T_{x})+\be_{x}),\quad z\in\h,
\end{align*}
where $\be_{x}=\as{\cdot,\de_{x}}\de_{x}$ can be considered as  boundary term since by considering the forward tree the backward edge is 'missing'. (In the case of the adjacency matrix which considered in \cite{KLW1} this is not necessary due to the zero diagonal of the operator.)

It is well known, see for instance \cite{ASW1,Kel,KLW1,Kl1},  that the Green function satisfies the following recursion equation
\begin{align}\label{e:rec}\tag{$\clubsuit$}
-\frac{1}{\Gm_{x}(z,\Lp(T))}=z-\deg_{T}(x)+\sum_{y\in S_{x}^{T}}\Gm_{y}(z,\Lp(T)),\qquad z\in\h,
\end{align}
where $\deg_{T}(x)$ is the vertex degree of $x$ in $T$ if $x\neq o$ and $\deg_{T}(o)$ is the vertex degree of $o$ plus one which is due to adding $\be_{o}$ to $\Lp(T)$. Moreover,  $S_{x}^{T}$ denotes the set of forward neighbors of $x$.
For a branching process $b$ and $\Te=\Te^{(b)}$, we denote the Green functions $z\mapsto\Gm_{x}(z,\Lp(\te))$, $\te\in\Te$, $x\in\te$, for short by
\begin{align*}
    \Gm_{x}^{\te}:=\Gm_{x}(z,\Lp(\te)),\qquad z\in\h,
\end{align*}
and we write $\Gm_{x}^{(\cdot)}$ for the function $\te\mapsto\Gm_{x}^{\te}$.
In the case of trees $\T$ generated by a substitution matrix, we have  $\Gm_{x}(z,\Lp(\T))=\Gm_{y}(z,\Lp(\T))$ for vertices $x,y$ that carry the same label, i.e. $a(x)=a(y)$. We define the vector $\Gm=(\Gm_{j})_{j\in\A}\in\h^{ \A}$ by letting
\begin{align*}
    \Gm_{j}=\Gm_{o(j)}(z,\Lp(\T)),\qquad z\in\h,\,j\in\A.
\end{align*}
Similarly, $\deg_{\T}$ depends only on the label of a vertex and it is given by $1+\sum_{k\in\A}M_{j,k}$ for a vertex of label $j\in\A$. So, we write $\deg(a(x))=\deg_{\T}(x)=1+\sum_{k\in\A}M_{a(x),k}$, for $x\in\T$.
With this notation, the recursion relation for the Green functions $\Lp(\T)$ can be reduced to the finite system of equations given by
\begin{align*}
-\frac{1}{\Gm_{j}} =z-\deg(j)+\sum_{k\in\A}M_{j,k}\Gm_{k},\quad j\in\A.
\end{align*}


In \cite[Theorem 3.1]{Kel} (compare also \cite[Theorem 6]{KLW1}) it is shown that there is a finite set $\Sigma_0$ such that the truncated Green functions $\Gm=(\Gm_{j})$ can be  extended continuously to a function $\Sigma\cup\h\to\h$, where
\begin{align*}
    \Sigma=\si_{\mathrm{ac}}(\Lp(\T)+\be)\setminus\Sigma_0,
\end{align*}
and $\be=\sum_{j\in\A}\as{\cdot,\de_{o(j)}}\de_{o(j)}$ with $o(j)$ being the root of $\T_{j}$, $j\in\A$.
In particular, for all $E\in\Sigma$  and $j\in\A$ the limits  $\Gm_{j}(E,\Lp(\T)) =\lim_{\eta\downarrow0}\Gm_{j}(E+i\eta,\Lp(\T))$
exist, are continuous functions in $E$ and $\Im \Gm_{x}(E,\Lp(\T))>0$. Since the measures $\Im G_{x}(E+i\eta,\Lp(\T))dE$ converge weakly to the spectral measure $\mu_{x}$, we have $\Sigma\subset \si_{\mathrm{ac}}(\Lp(\T_{x})+\be)$ for all $x\in \T$ and, therefore, $\Sigma\subset \si_{\mathrm{ac}}(\Lp(\T)+\be)$ (as $\T_{j}=\T_{o(j)}$).


\subsection{Equivalence classes of random trees}\label{ss:equiv}
For $\te\in\Te$ and $x\in\te$, we  write  $S_{x}^{\te}$ for the set of forward neighbors of $x$ without specifying in which component $\te_{j}$, $j\in\A$, of $\te$ the vertex $x$  is actually contained in.

Let  $\te,\te'\in\Te$.
We say that two subsets $A\subseteq\te$, $ A'\subseteq \te'$ are \emph{isomorphic} if there is a graph isomorphism between $A$ and $A'$ that leaves the labeling invariant. Whenever two sets are isomorphic, we identify the corresponding vertices in notation. For example, for fixed $j\in\A$ all elements of $\Te_{j}$ have a root of label $j$.

For all $\te_{j}$, we fix a vertex $o'(j)$ in $S_{o(j)}^{\te}$ of label $j$  whenever such a vertex exists. In this case, we say that $o'(j)$ exists. Otherwise, when there is no such vertex, we say that $o'(j)$ does not exist and we let $S_{o'(j)}^{\te}=\emptyset$. In $\T_{j}$, the vertex $o'(j)$ always exists  by positivity of the entries of the substitution matrix.

We define
\begin{align*}
S(\te_j):=S_{o'(j)}^{\te}\cup S_{o(j)}^{\te}\setminus \{o'(j)\}.
\end{align*}
Hence, in the case where $o'(j)$ does not exists, we get $S(\te_{j})=S_{o(j)}^{\te}$.

For each $j \in\A$, we define an equivalence relation  on $\Te_{j}$ by
\begin{align*}
    \te_j\cong \te_j'
\end{align*}
whenever $S(\te_{j})$ and $S(\te_{j}')$ are isomorphic. In particular, if  $o'(j)$ exists, then the subsets $S_{o'(j)}^{\te}\cup S_{o(j)}^{\te}\setminus \{o'(j)\}$ and $S_{o'(j)}^{\te'}\cup S_{o(j)}^{\te'}\setminus \{o'(j)\}$ have to be isomorphic and otherwise only  $S_{o(j)}^{\te}$ and $S_{o(j)}^{\te'}$ have to be isomorphic.

We denote the equivalence classes by $[\te_{j}]$, $j\in\A$. Moreover, we write $\te\cong\te'$ if $\te_j\cong\te_j'$ for all $j\in\A$ and  denote the equivalence classes by $[\te]$ correspondingly.

There is a one to one map from the equivalence classes $\{[\te_{j}]\mid \te_{j}\in\Te_j\}$  to $\N_0^{\A}\times\N_0^{\A}$, where
$[\te_{j}]\mapsto(n,m)$ is such that there are $n_{k}$ vertices of label $k$ in $S_{o(j)}^{\te}$ and  $m_{k}$ vertices of label $k$ in  $S_{o'(j)}^{\te}$, $k\in\A$. (In the case, where no vertex of label $j$ exists in $S^{\te}_{o(j)}$, we have $m_{k}=0$  for all $k\in\A$ since $S_{o'(j)}^{\te}=
\emptyset$.) Hence, the set of equivalence classes is countable.

Whenever the dependance of a set or a quantity on  $\te$ is  indeed only a dependance on $[\te]$, we indicate this by replacing $\te$ by $[\te]$ in notation. For example, by the identification of the vertices via labeling invariant graph isomorphisms, the set  $S_{v}^{\te}$  does not depend on the choice of $\te\in[\te']$ for $\te'\in\Te$ for $v\in\{o(j),o'(j)\mid j\in\A\}$. Therefore, we will  write $S_{v}^{[\te]} =S_{v}^{\te}$ and $S([\te_{j}])=S(\te_{j})$. Indeed, $S_{v}^{[\te]}$  depends only on the component of $\te$ where the vertex $v$ lies in. 

Let us stress the importance of this equivalence relation: Let two functions $f:\te\to\C$, $f':\te'\to\C$ for  $\te,\te'\in\Te$ be given. Since the vertex sets of $\te$ and $\te'$ can be totally different except for their roots, it is not clear how to compare these functions. However, if $\te\cong\te'$, then we can compare these functions on $S_{j}(\te)$ and $S_{j}(\te')$, $j\in\A$ by identification of the vertices on these sets. 

\subsection{Conditioned expectations}
For  a measurable set $A\subseteq \Te$ and an integrable function $f$, we write
\begin{align*}
\EE_j(f\mid A)=\frac{1}{\PP_{j}(A)}\int_{A}f(\te_{j} ) d\PP_{j}(\te_{j}),\qquad j\in\A,
\end{align*}
if $\PP_{j}(A)>0$ and $\EE_j(f\mid A)=0$ otherwise. For $A=\Te_{j}$, we write $\EE_{j}(f)=\EE(f\mid \Te_{j})$.

\begin{lemma}\label{l:E}
Let $\te \in\Te$, $j\in\A$ and $x\in S(\te_{j})$. Then, for every function $g$ such that $\te\mapsto g(\Gm^{\te}_{x})$ is integrable
\begin{align*}
\EE_{j}(g(\Gm^{(\cdot)}_{x})\mid[\te_{j}])=   \EE_{o(a(x))}\big(g(\Gm^{(\cdot)}_{o(a(x))})\big)
\end{align*}
and for  $y\in S(\te_{j})$ with $a(x)=a(y)$ and every function $f$ such that $\te\mapsto f(\Gm_{x}^{\te},\Gm_{y}^{\te})$ is integrable
\begin{align*}
\EE_{j}\big(f(\Gm^{(\cdot)}_{x},\Gm^{(\cdot)}_{y})\mid [\te_{j}]\big)=
\EE_{j}\big(f(\Gm^{(\cdot)}_{y},\Gm^{(\cdot)}_{x})\mid [\te_{j}]\big) .
\end{align*}
\end{lemma}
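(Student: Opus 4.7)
The plan rests on two observations. First, $\Gm_x^{\te}$ depends on $\te$ only through the forward subtree $\te_x$ of $x$: iterating the recursion \eqref{e:rec} downwards from $x$ expresses $\Gm_x^{\te}$ as a measurable function of the labeled rooted tree $\te_x$ alone, and likewise $(\Gm_x^{\te},\Gm_y^{\te})$ as a measurable function of the pair $(\te_x,\te_y)$. Second, by the branching property of the multi-type Galton Watson process, conditioning on $[\te_j]$ amounts to fixing, up to label preserving isomorphism, the structure of $\{o(j),o'(j)\}\cup S(\te_j)$, after which the forward subtrees $\te_v$, $v\in S(\te_j)$, are mutually independent, and $\te_v$ is distributed as $\PP_{a(v)}$.

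I would first verify this conditional independence statement directly from the construction of $\PP_j$: at each vertex of $\te_j$ the configuration of forward neighbors is sampled independently according to the prescribed offspring distribution, so conditioning on the forward configurations at $o(j)$ and at $o'(j)$ (when the latter exists) leaves the subtrees rooted at the children of $o(j)$ and at the children of $o'(j)$ mutually independent, each one a Galton Watson tree of the appropriate type. By definition $S(\te_j)$ is exactly the union of these two families of roots, with $o'(j)$ removed from the first, and these roots are measurable with respect to $[\te_j]$. Hence what remains after conditioning on $[\te_j]$ is precisely the product measure $\bigotimes_{v\in S(\te_j)}\PP_{a(v)}$ on the family $(\te_v)_{v\in S(\te_j)}$.

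Once this is in place, both claims are immediate. For the first, writing $g(\Gm_x^{\te})=\widetilde g(\te_x)$ by the first observation and applying the conditional law of $\te_x$ gives
\begin{align*}
\EE_j\bigl(g(\Gm_x^{(\cdot)})\bigm|[\te_j]\bigr)=\int \widetilde g(\tau)\,d\PP_{a(x)}(\tau)=\EE_{a(x)}\bigl(g(\Gm_{o(a(x))}^{(\cdot)})\bigr),
\end{align*}
where the root of a generic $\tau\in\Te_{a(x)}$ is identified with $o(a(x))$. For the second, if $x,y\in S(\te_j)$ with $a(x)=a(y)$, then conditional on $[\te_j]$ the pair $(\te_x,\te_y)$ has the product law $\PP_{a(x)}\otimes\PP_{a(x)}$, which is invariant under exchange of its two factors; combining this invariance with the first observation yields
\begin{align*}
\EE_j\bigl(f(\Gm_x^{(\cdot)},\Gm_y^{(\cdot)})\bigm|[\te_j]\bigr)=\iint \widetilde f(\tau,\tau')\,d\PP_{a(x)}(\tau)\,d\PP_{a(x)}(\tau')=\EE_j\bigl(f(\Gm_y^{(\cdot)},\Gm_x^{(\cdot)})\bigm|[\te_j]\bigr).
\end{align*}

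The main obstacle is bookkeeping: making precise what it means to condition on the isomorphism class $[\te_j]$ and verifying that after this conditioning the correct product Galton Watson law remains on the forward subtrees of the vertices in $S(\te_j)$. This has to cover both the case where $x$ is a first sphere vertex (a child of $o(j)$ different from $o'(j)$) and the case where $x$ lies in the second sphere (a child of $o'(j)$); in either case the independent sampling of offspring configurations at every vertex of the tree makes the conditional law of $\te_x$ equal to $\PP_{a(x)}$, which is what drives both assertions.
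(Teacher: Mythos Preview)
Your proposal is correct and follows essentially the same route as the paper: both arguments rest on the observation that $\Gm_x^{\te}$ depends only on the forward tree $\te_x$ (via the recursion \eqref{e:rec}), and that by the Galton--Watson branching property the forward trees at distinct vertices of $S(\te_j)$ are, conditionally on $[\te_j]$, independent with law $\PP_{a(\cdot)}$. Your write-up is more explicit about the conditioning and the product structure than the paper's short proof, but the content is the same; note only that your product-law argument for the second claim tacitly assumes $x\neq y$, the case $x=y$ being trivial.
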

\begin{proof} By \eqref{e:rec} the value of the truncated Green function of a vertex   depends only on the branching of the forward tree. Since the distribution of the branching in a forward tree of a vertex $x$ depends only on $a(x)$, we conclude the first statement. Moreover, the random variables $\Gm_{x}^{\te}$ and $\Gm_{y}^{\te}$ are identically distributed for $a(x)=a(y)$. Furthermore, since  the forward trees of $x,y\in S(\te_{j})$, $x\neq y$, do not coincide, the distribution of their branching is independent. We conclude that
the random variables $\Gm_{x}^{\te}$ and $\Gm_{y}^{\te}$ are independent. Thus, the second statement follows.
\end{proof}


\section{The fundamental inequalities}\label{s:estimates}

In this section we prove the crucial inequalities which are the ingredients for the proof of the theorem.

\subsection{The substitute for the triangle inequality} The distance function $\gm$ on $\h$ defined in the introduction as $\gm(\xi,\zeta)=|\xi-\zeta|^{2}/(\Im \xi\Im \zeta)$, $\xi,\zeta\in\h$, does not satisfy the triangle inequality. Nevertheless, we can give an estimate for additive perturbations in one argument, where the error term does not depend on the other argument.
A similar estimate is found in \cite[Lemma~1]{KLW2}.
\begin{lemma}\label{l:ti}
Let  $\zeta \in\h$, $\lm\in\C$ such that $\zeta+\lm\in\h$. Then, for all $\xi\in\h$,
\begin{align*}
    \gm(\xi,\zeta)\leq c_0(\zeta,\lm)(\gm(\xi,\zeta+\lm)+1),
\end{align*}
where $$c_0(\zeta,\lm)=\ab{1+\frac{\Im \lm}{\Im \zeta}}\ab{1+ \frac{2{|\lm|}}{\Im (\zeta+\lm)}}^{2}.$$
\end{lemma}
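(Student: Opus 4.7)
The plan is to factor off the prefactor $1+\Im\lm/\Im\zeta$ by a direct algebraic identity and then bound the remaining ratio by a case distinction according to the size of $|\xi-\zeta-\lm|$ relative to $\Im(\zeta+\lm)/2$. First I would write
\begin{align*}
\gm(\xi,\zeta) = \frac{|\xi-\zeta|^2}{\Im\xi\,\Im\zeta} = \ab{1+\frac{\Im\lm}{\Im\zeta}}\,\frac{|\xi-\zeta|^2}{\Im\xi\,\Im(\zeta+\lm)},
\end{align*}
using $\Im(\zeta+\lm)/\Im\zeta = 1 + \Im\lm/\Im\zeta$, which is positive since $\zeta+\lm\in\h$ and $\zeta\in\h$. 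This isolates the first factor of $c_0(\zeta,\lm)$, so it suffices to show
\begin{align*}
\frac{|\xi-\zeta|^2}{\Im\xi\,\Im(\zeta+\lm)} \leq \ab{1+\frac{2|\lm|}{\Im(\zeta+\lm)}}^{2}\ab{\gm(\xi,\zeta+\lm)+1}.
\end{align*}

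To prove this reduced inequality I would set $\eta = \Im(\zeta+\lm)$ and split into two cases. If $|\xi-\zeta-\lm|\geq\eta/2$, then the triangle inequality $|\xi-\zeta|\leq|\xi-\zeta-\lm|+|\lm|$ yields
\begin{align*}
|\xi-\zeta|^2 \leq |\xi-\zeta-\lm|^2\ab{1+\frac{|\lm|}{|\xi-\zeta-\lm|}}^{2} \leq |\xi-\zeta-\lm|^2\ab{1+\frac{2|\lm|}{\eta}}^{2},
\end{align*}
and dividing by $\Im\xi\,\eta$ gives $|\xi-\zeta|^2/(\Im\xi\,\eta) \leq (1+2|\lm|/\eta)^2\,\gm(\xi,\zeta+\lm)$, which is already enough. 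If instead $|\xi-\zeta-\lm|<\eta/2$, then $\Im\xi\geq \eta - |\xi-\zeta-\lm| > \eta/2$ and $|\xi-\zeta|<\eta/2+|\lm|$, so a direct computation with $s=|\lm|/\eta$ gives
\begin{align*}
\frac{|\xi-\zeta|^2}{\Im\xi\,\eta} < \frac{(\eta/2+|\lm|)^2}{(\eta/2)\,\eta} = \tfrac{1}{2} + 2s + 2s^2 \leq (1+2s)^2,
\end{align*}
which is absorbed by the ``$+1$'' on the right-hand side of the reduced inequality.

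The main obstacle is exactly this second regime: when $\xi$ is close to $\zeta+\lm$, the quantity $\gm(\xi,\zeta+\lm)$ can be arbitrarily small and therefore cannot control $\gm(\xi,\zeta)$ through any multiplicative constant alone; the additive ``$+1$'' is precisely what provides the slack needed to absorb the fixed-size contribution coming from the shift $\lm$. The threshold $\eta/2$ for the case split is chosen so that the two estimates dovetail cleanly, and combining them with the factorization from the first step yields the claimed bound $\gm(\xi,\zeta)\leq c_0(\zeta,\lm)(\gm(\xi,\zeta+\lm)+1)$.
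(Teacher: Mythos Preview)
Your proof is correct and follows essentially the same route as the paper's: you factor off $1+\Im\lm/\Im\zeta$ (equivalently, the paper works with $r\gm(\xi,\zeta)$ where $r=\Im\zeta/\Im(\zeta+\lm)$), then perform the identical case split at $|\xi-\zeta-\lm|=\Im(\zeta+\lm)/2$, with the far case handled by the triangle inequality and the near case by the lower bound $\Im\xi>\eta/2$. The only cosmetic difference is that in the near case the paper keeps the $\gm(\xi,\zeta+\lm)$ term and bounds the cross term $2|\lm||\xi-\zeta-\lm|+|\lm|^2$ separately, whereas you bound the whole expression by $(1+2s)^2$ and absorb it entirely into the ``$+1$''; both variants yield the same constant.
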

\begin{proof} Let
$r=\Im \zeta/\Im(\zeta+\lm)$.
We  distinguish two cases: If $|\xi-(\zeta+\lm)|\geq \Im (\zeta+\lm)/2$, then
$$r\gm(\xi,\zeta)\leq \ab{1+\frac{|\lm|}{|\xi-(\zeta+\lm)|}}^2 \gm(\xi,\zeta+\lm)\leq \ab{1+{\frac{2|\lm|}{\Im (\zeta+\lm)}}}^2 \gm(\xi,\zeta+\lm).$$
If, on the other hand, $|\xi-(\zeta+\lm)|\leq \Im (\zeta+\lm)/2$, then $\Im \xi\geq\Im (\zeta+\lm)/2$. So, we obtain
\begin{align*}
r\gm(\xi,\zeta)&\leq \gm(\xi,\zeta+\lm)+\frac{2|\lm||\xi-(\zeta+\lm)|+|\lm|^{2}}{\Im \xi\Im (\zeta+\lm)}\leq \gm(\xi,\zeta+\lm)+ 2\frac{|\lm|\Im(\zeta+\lm)+|\lm|^{2}}{(\Im (\zeta+\lm))^{2}}.
\end{align*}
The statement now follows from the definition of $c_0$.
\end{proof}
\subsection{The one step expansion estimate}\label{ss:TSE}
In this subsection we prove an inequality that allows to estimate the $\gm$-distance of two Green functions at a vertex by their distances attained at the forward neighbors.



In the following, $g_{x}$ always denotes an arbitrary element of $\h$, but it can be thought as $\Gm_{x}(z,\Lp({T}))$, $x\in T$. Recall that $\Gm_{j}=\Gm_{o(j)}(z,\Lp({\T}))$ are the truncated Green functions of $\Lp(\T)$ indexed by the labels $j\in\A$, where $\T=(\T_{j})$ are the trees generated by a substitution matrix.

For $v\in T$, $x\in S_{v}$, let the weights $q_{x}:\h^{S_{v}}\to(0,1)$ be given by
\begin{align*}
q_{x}(g)&:=\frac{\Im g_{x}}{\sum_{u\in S_{v}}\Im g_{u}}, \qquad g\in\h^{S_{v}},
\end{align*}
and note that $\sum_{x\in S_{v}}q_{x}\equiv1$. Moreover, let $Q_{x,y}:\h^{\{x,y\}}\to[0,1]$ and $\cos\al_{x,y}:\h^{\{x,y\}}\to\R$, $x,y\in T$, be given by
\begin{align*}
Q_{x,y}(g)&:=\frac{\sqrt{\Im g_{x}\Im g_{y}\Im \Gm_{a(x)}\Im \Gm_{a(y)}\gm(g_{x},\Gm_{a(x)})\gm(g_{y},\Gm_{a(y)})}}{\frac{1}{2}(\Im g_{x}\Im \Gm_{a(y)}\gm(g_{y},\Gm_{a(y)})+\Im g_{y}\Im \Gm_{a(x)}\gm(g_{x},\Gm_{a(x)}))},\\
\cos\al_{x,y}(g)&:=\cos\big(\arg(g_{x}-\Gm_{a(x)})\ov{(g_{y}-\Gm_{a(y)})}\big),
\end{align*}
for $g\in\h^{\{x,y\}}$ such that $g_{x}\neq \Gm_{a(x)}$, $g_{y}\neq \Gm_{a(y)}$. Then, $Q_{x,y}$ is the quotient of a geometric and an arithmetic mean which implies that it takes values between $0$ and $1$.
If $g_{x}=\Gm_{x}$ or $g_{y}=\Gm_{y}$, we put $Q_{x,y}(g)=\cos\al_{x,y}(g)=0$.
Sometimes it will be convenient to put a higher dimensional vector such as $g\in\h^{S_{v}}$ into the functions $q_{y}$, $Q_{x,y}$ and $\cos\al_{x,y}$ without indicating explicitly that we actually consider the appropriate restriction of $g$. Hence, by what we discussed so far, we have
\begin{align}\label{e:sum}\tag{$\spadesuit$}
\Big|\sum_{y\in S_{v}}q_{y}(g)Q_{x,y}(g)\cos\al_{x,y}(g)\Big|\leq1,\qquad g\in \h^{S_{v}}.
\end{align}
The functions  $Q_{x,y}$, $\al_{x,y}$ depend  on the unperturbed Green functions $\Gm_{j}$, $j\in\A$, and thus on $z$, but we suppress this dependance on $z$ to ease notation. In any case, by continuity of $\Gm_{j}$ all the estimates that follow will be uniform in $z$ for $z\in I+i(0,1]$ and $I\subset \Sigma$ compact.

For $x\in T$, let $N_{x,j}^{T}$ be the number of forward neighbors of $x$ in $T$ of type $j\in\A$, i.e.,
\begin{align*}
N_{x,j}^{T}=|\{y\in S_{x}^{T}\mid a(y)=j\}|.
\end{align*}
Next, we prove the one step expansion formula. A similar statement can be found in \cite[Lemma~5]{KLW1} and \cite[Lemma~2]{KLW2}.


\begin{lemma}\label{l:OSE} (One step expansion estimate)
Let $I\subset \Sigma$ be compact, $T$ be a rooted tree that satisfies (F) and $v\in T$. If $N_{v,j}^{T}=M_{a(v),j}$ for all $j\in\A$, then, for all $z\in I+i(0,1]$
\begin{align*}
{\gm\ap{\Gm_{v}^{T},\Gm_{a(v)}}}
&\leq \sum_{x\in S_{v}^{T}} \frac{\Im \Gm_{a(x)}}{\sum_{j\in\A}M_{a(v),j}\Im \Gm_{j}} \ab{\sum\limits_{y\in S_v^{T}} q_{y}(\Gm^{T})Q_{x,y}(\Gm^{T})\cos\al_{x,y}(\Gm^{T})} \gm(\Gm^{T}_{x},\Gm_{a(x)}),
\end{align*}
where $\Gm^{T}_{x}=\Gm_{x}(z,\Lp(T))$, $x\in T$.
Otherwise, there exists $c_1>0$ such that for all $z\in I+i(0,1]$
\begin{align*}
{\gm\ap{\Gm_{v}^{T},\Gm_{a(v)}}}
&\leq c_1\Big({\sum_{x\in S_{v}^{T}} \frac{\Im \Gm_{a(x)}}{\sum_{j\in\A}M_{a(v),j}\Im \Gm_{j}} \gm(\Gm_{x}^{T},\Gm_{a(x)})+|S_{v}^{T}|}\Big).
\end{align*}
\end{lemma}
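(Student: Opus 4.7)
My plan is to derive both cases from a master inequality obtained by subtracting the recursion \eqref{e:rec} for $\Gm_v^T$ from the deterministic analog. Writing $\Delta_x := \Gm_x^T - \Gm_{a(x)}$ and $D_j := N_{v,j}^T - M_{a(v),j}$, and using $\deg_T(v) - \deg(a(v)) = \sum_j D_j$, the subtraction yields
$$\frac{1}{\Gm_{a(v)}} - \frac{1}{\Gm_v^T} \;=\; \sum_{x\in S_v^T}\Delta_x \;+\; \sum_{j\in\A} D_j(\Gm_j - 1).$$
Multiplying by $\Gm_v^T\Gm_{a(v)}$ and using $\Im\Gm_v^T/|\Gm_v^T|^2 = \eta + \sum_{x\in S_v^T}\Im\Gm_x^T$ and $\Im\Gm_{a(v)}/|\Gm_{a(v)}|^2 = \eta + \sum_j M_{a(v),j}\Im\Gm_j$ (both immediate from \eqref{e:rec}), then dropping the positive $\eta$ terms, gives
$$\gm(\Gm_v^T,\Gm_{a(v)}) \;\leq\; \frac{\bigl|\sum_{x\in S_v^T}\Delta_x + \sum_{j\in\A} D_j(\Gm_j - 1)\bigr|^2}{\bigl(\sum_{x\in S_v^T}\Im\Gm_x^T\bigr)\bigl(\sum_{j\in\A}M_{a(v),j}\Im\Gm_j\bigr)}.$$

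For Case 1, where $D_j \equiv 0$, I would expand $|\sum_x\Delta_x|^2 = \sum_{x,y}|\Delta_x||\Delta_y|\cos\al_{x,y}(\Gm^T)$, substitute $|\Delta_x|^2 = \Im\Gm_x^T\,\Im\Gm_{a(x)}\,\gm(\Gm_x^T,\Gm_{a(x)})$, and recognize the geometric mean inside the double sum as $Q_{x,y}(\Gm^T)$ times the arithmetic mean $\tfrac12\bigl(\Im\Gm_x^T\Im\Gm_{a(y)}\gm(\Gm_y^T,\Gm_{a(y)}) + \Im\Gm_y^T\Im\Gm_{a(x)}\gm(\Gm_x^T,\Gm_{a(x)})\bigr)$. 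Symmetrizing in $x \leftrightarrow y$ collapses the two halves of this mean into one, and dividing by $\sum_x\Im\Gm_x^T$ produces the weights $q_y(\Gm^T)$. Since matched branching gives $\sum_{x\in S_v^T}\Im\Gm_{a(x)} = \sum_j M_{a(v),j}\Im\Gm_j$, this yields the Case~1 bound, the absolute value appearing when one abandons a sign analysis of the inner sum.

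For Case 2, the mismatch is absorbed by introducing an auxiliary Herglotz function
$$-\frac{1}{\tilde\Gm} \;=\; z - \deg(a(v)) + \sum_{x \in S_0}\Gm_x^T + \sum_{x \in S_+}\Gm_{a(x)},$$
where $S_0\subseteq S_v^T$ contains $\min(N_{v,j}^T, M_{a(v),j})$ vertices of label $j$ and $S_+$ is a disjoint set of $(M_{a(v),j} - N_{v,j}^T)_+$ virtual vertices of label $j$. Then $\tilde\Gm$ has matched branching $M$ at $v$, so the Case~1 analysis applied to $\tilde\Gm$ (with the virtual children $x\in S_+$ contributing $\gm(\Gm_{a(x)},\Gm_{a(x)})=0$) gives $\gm(\tilde\Gm,\Gm_{a(v)}) \leq \sum_{x\in S_v^T}\Im\Gm_{a(x)}\gm(\Gm_x^T,\Gm_{a(x)})/\sum_j M_{a(v),j}\Im\Gm_j$. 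I would then apply Lemma~\ref{l:ti} with $\xi = \Gm_v^T$, $\zeta = \Gm_{a(v)}$, $\lambda = \tilde\Gm - \Gm_{a(v)}$ to obtain $\gm(\Gm_v^T,\Gm_{a(v)}) \leq c_0(\Gm_{a(v)},\lambda)(\gm(\Gm_v^T,\tilde\Gm) + 1)$, and bound $\gm(\Gm_v^T,\tilde\Gm)$ by using the explicit identity $1/\tilde\Gm - 1/\Gm_v^T = \sum_{x\in S_+}(\Gm_{a(x)} - 1) - \sum_{x\in S_-}(\Gm_x^T - 1)$ with $S_- := S_v^T\setminus S_0$, which runs over at most $|S_v^T| + \deg(a(v)) - 1$ mismatched neighbors.

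The hard part will be the constant tracking in Case 2: one must verify that the factor $c_0(\Gm_{a(v)},\tilde\Gm - \Gm_{a(v)})$ combined with the Cauchy--Schwarz estimate of $\gm(\Gm_v^T,\tilde\Gm)$ produces at most $c_1(I)\cdot(\sum + |S_v^T|)$ uniformly in $T$, $v$, and $z\in I + i(0,1]$. The inputs are the uniform bounds $\Im\Gm_j \geq \alpha(I) > 0$ and $|\Gm_j| \leq \beta(I)$ on $I\subset\Sigma$ from \cite{Kel}, together with assumption (F), which forces $|S_v^T|\geq 1$ so that denominators do not degenerate; the virtual children in $S_+$ (when nonempty) anchor $\Im\tilde\Gm$ from below, while in the complementary regime where $S_+ = \emptyset$ one relies on the (F)-based nondegeneracy of $\sum_{x\in S_0}\Im\Gm_x^T$ to close the estimate.
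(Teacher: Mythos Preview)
Your Case~1 argument is correct and coincides with the paper's: the master inequality you derive is exactly what the paper's two ``observations'' combine to, and the expansion of $|\sum_x\Delta_x|^2$ into the $q_y Q_{x,y}\cos\al_{x,y}$ form is the same computation.

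Your Case~2 strategy, however, has a genuine gap. You apply Lemma~\ref{l:ti} with $\lambda=\tilde\Gm-\Gm_{a(v)}$, so the constant $c_0(\Gm_{a(v)},\lambda)$ depends on $\tilde\Gm$, which in turn depends on the \emph{random} Green functions $\Gm_x^T$ for $x\in S_0$. This $c_0$ is not uniformly controlled. Concretely, take $S_+=\emptyset$, one vertex in $S_-$, and let every child $x\in S_v^T$ have $\Im\Gm_x^T$ of order~$\eps$ while $|\Gm_x^T-\Gm_{a(x)}|$ stays bounded away from zero. Then $\Im\tilde\Gm=|\tilde\Gm|^2(\eta+\sum_{S_0}\Im\Gm_x^T)=O(\eps)$ while $|\tilde\Gm-\Gm_{a(v)}|\geq\Im\Gm_{a(v)}-\Im\tilde\Gm\gtrsim 1$, so the factor $(1+2|\lambda|/\Im\tilde\Gm)^2$ is of order $1/\eps^2$ and $c_0=O(1/\eps)$. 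At the same time your identity gives $\gm(\Gm_v^T,\tilde\Gm)\leq |{\sum_{S_-}(\Gm_x^T-1)}|^2\big/\big((\sum_{S_v^T}\Im\Gm_x^T)(\sum_{S_0}\Im\Gm_x^T)\big)=O(1/\eps^2)$. The product $c_0\cdot\gm(\Gm_v^T,\tilde\Gm)$ is thus $O(1/\eps^3)$, whereas the target bound $c_1\big(\sum_x\ldots\gm(\Gm_x^T,\Gm_{a(x)})+|S_v^T|\big)$ is only $O(1/\eps)$. No Cauchy--Schwarz rearrangement rescues this: any route through the intermediate point $\tilde\Gm$ produces a product of two factors each linear in the $\gm(\Gm_x^T,\Gm_{a(x)})$, hence quadratic overall. (Your Case~1 bound on $\gm(\tilde\Gm,\Gm_{a(v)})$ is correct but never enters the final estimate in a useful way.)

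The paper avoids this by applying Lemma~\ref{l:ti} \emph{before} taking reciprocals, i.e.\ at the level of the sums. After the contraction step $\gm(\Gm_v^T,\Gm_{a(v)})\leq\gm\big(\sum_{S_v^T}\Gm_x^T,\;\sum_j M_{a(v),j}\Gm_j+\sum_j D_j\big)$, one uses $\lambda=\sum_j D_j(\Gm_j-1)$, which involves only the \emph{deterministic} $\Gm_j$. Then $\zeta+\lambda=\sum_j N_{v,j}^T\Gm_j$, so both $\Im\zeta=\sum_j M_{a(v),j}\Im\Gm_j$ and $\Im(\zeta+\lambda)=\sum_j N_{v,j}^T\Im\Gm_j$ are bounded below by $\min_k\Im\Gm_k>0$ on $I+i(0,1]$ (the latter using~(F)), and $|\lambda|$ is bounded by $(|S_v^T|+\sum_j M_{a(v),j})\max_k|\Gm_k-1|$. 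This makes $c_0$ a function of $|S_v^T|$ and $I$-dependent constants only, and the resulting bound is linear in the $\gm(\Gm_x^T,\Gm_{a(x)})$ as required.
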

\begin{proof}
We start the proof with two observations:
Firstly,  by a direct calculation one checks
that for $\xi,\zeta,z\in\h$, $c,d\in\R$
\begin{align*}
\gm\Big({-\frac{1}{z-c+\xi},-\frac{1}{z-d+\zeta}}\Big) \leq\gm(\xi-c,\zeta-d)=\gm(\xi,\zeta+c-d).
\end{align*}
Secondly, for  $g\in \h^{ S_{v}^{T}}$, we  calculate directly
\begin{align*}
\Big|{\sum_{x\in S_{v}^{T}}g_x-\sum_{x\in S_{v}^{T}}\Gm_{a(x)}}\Big|^{2}
&=\sum_{x\in S_{v}^{T}}\Im \Gm_{a(x)} \Big({\sum_{x\in S_{v}^{T}} \Im g_{y}\cos \al_{x,y}(g)Q_{x,y}(g)}\Big)\gm(g_{x},\Gm_{a(x)}).
\end{align*}
Combined with \eqref{e:rec}, these observations yield the first statement.\\
For the other statement, let $k=a(v)$. We use \eqref{e:rec} and apply the first observation to get
\begin{align*}
{\gm\ap{\Gm_{v}^{T},\Gm_{a(v)}}}\leq\gm\Big(\sum_{x\in S_{v}^{T}}\Gm_{x}^{T}, \sum_{j\in\A}(M_{k,j}\Gm_{j}+N_{v,j}^{T}-M_{k,j})\Big),
\end{align*}
where we additionally used that $\deg_{T}(v)=1+\sum_{j}N_{v,j}$ and $\deg(k)=1+\sum_{j}M_{k,j}$.
We apply Lemma~\ref{l:ti} with
 $\xi=\sum_{x\in S_{v}^{ T}} \Gm_{x}^{ T}$, $\zeta=\sum_{j\in\A}(N_{v,j}^{T}-M_{k,j} +M_{k,j}\Gm_{j})$ and $\lm=\sum_{j\in\A}(N_{v,j}^{ T}-M_{k,j})(\Gm_{j}-1)$, (clearly $\zeta+\lm\in\h$), to obtain
\begin{align*}
\gm(\Gm_{v}^{ T},\Gm_{v})\leq c_0 \Big({ \gm\Big({\sum_{x\in S_{v}^{ T}}\Gm_{x}^{ T},\sum_{j\in\A}N_{v,j}^{ T}\Gm_{j}}\Big) +1}\Big) \leq c_0\Big({\sum_{x\in S_{v}^{ T}}  \frac{\Im \Gm_{a(x)}}{\sum_{u\in S_{v}^{ T}}\Im \Gm_{a(u)}}  \gm(\Gm^{ T}_{x},\Gm_{a(x)})+1}\Big),
\end{align*}
where we used the second observation and inequality \eqref{e:sum} in the second step. The constant $c_0$ from Lemma~\ref{l:ti} is a product $c_{0}=c_{0}'c_{0}''$ with $c_{0}'=(1+\Im\lm/\Im \zeta)$ and $c_{0}''=(1+2|\lm|/\Im(\zeta+\lm))^{2}$. With our choice of $\zeta$ and $\lm$ the constant $c_{0}$ is given by
\begin{align*}
c_{0}=c_{0}'c_{0}''=\ab{\frac{ \sum_{j}N_{v,j}^{ T}\Im \Gm_{j}} {\sum_{j}M_{a(v),j}\Im\Gm_{j}}} \ab{1+\frac{2|\sum_{j}(N_{v,j}^{ T}-M_{a(v),j})(\Gm_{j}-1)|} {\sum_{j}N_{v,j}^{ T}\Im\Gm_{j}}}^{2}
\end{align*}
and, therefore, since $\sum_{u\in S_{v}^{ T}}\Im \Gm_{a(u)}=\sum_{j\in \A}N_{v,j}^{T}\Gm_{j}$,
\begin{align*}
\gm(\Gm_{v}^{ T},\Gm_{v})\leq c_0''{\sum_{x\in S_{v}^{ T}}  \frac{\Im \Gm_{a(x)}}{\sum_{j\in\A}M_{a(v),j}\Im \Gm_{j}}  \gm(\Gm^{ T}_{x},\Gm_{a(x)})}+c_{0}.
\end{align*}
Let us estimate $c_{0}'$ and $c_{0}''$ to finish the proof. By definition of $\Sigma$ and compactness of $I$ there is $r\geq 1$ such that
$(|\Gm_{k}|+1)/\Im\Gm_{l}\leq r$ for all $k,l\in\A$ and all $z\in I+i(0,1]$.
The first term, $c_{0}'$, can be estimated by
\begin{align*}
c_{0}'={\frac{ \sum_{j}N_{v,j}^{ T}\Im \Gm_{j}} {\sum_{j}M_{a(v),j}\Im\Gm_{j}}}\leq r |S_{v}^{T}|.
\end{align*}
For the second term, $c_{0}''$, we first note that  $\sum_{k}|N_{v,k}^{T}-M_{j,k}|\leq |S_{v}^{T}|+\sum_{k}M_{j,k}-2$ since $T$ and $\T$ have at least one forward neighbor in common  by (F). We estimate
\begin{align*}
\sqrt{c_0''}={1+\frac{2|\sum_{j}(N_{v,j}^{T}-M_{a(v),j})(\Gm_{j}-1)|} {\sum_{j}N_{v,j}^{ T}\Im\Gm_{j}}}\leq 1+2r\ab{1+\frac{\sum_{j}M_{a(v),j}-2}{|S_{v}^{T}|}}\leq 2r\sum_{j\in\A}M_{a(v),j},
\end{align*}
where we used the inequality $1+(m-2)/n\leq m-1$ for $n\geq1$, $m\geq2$ with $n=|S_{v}^{T}|\geq 1$ (by (F)) and $m=\sum_{j}M_{a(v),j}\geq 2$ (by non one-dimensionality) in the last step.
Letting $c_1:=4r^3\max_{k\in\A}(\sum_{j}M_{k,j})^2$, we get the statement.
\end{proof}


\subsection{The two step expansion estimate}\label{s:expansion}
Next, we adapt the one step expansion  to our model to get the two step expansion by iteration.

We first introduce some notation.  
Let the root $o$ of the tree $T$ have label $j\in\A$ and recall that  $S(T)=S_{o'}\cup S_{o}\setminus\{o'\}$  if there is a vertex $o'$ of label $j$ in $S_{o}$ and $S(T)=S_{o}$ otherwise. For $x\in S_{o}^{T}$, let
\begin{align*}
p_{j,x}:=\frac{\Im \Gm_{a(x)}}{\sum_{k\in\A}M_{j,k}\Im \Gm_{k}}
\end{align*}
and for $x\in S_{o'}^{T}$, let
\begin{align*}
p_{j,x}:=
\frac{\Im \Gm_{a(o')}\Im \Gm_{a(x)}}{(\sum_{k\in\A}M_{j,k}\Im \Gm_{k})^{2}},
\end{align*}
where $\Gm_{j}$, $j\in\A$, are the Green functions of $\Lp(\T)$ as above. For $T=\T_{j}$, we  have $\sum_{x\in S({\T_{j}})}p_{j,x}=1$.
Thus,   for $z\in \h\cup\Sigma$, the matrix
$P=P(z):{\A\times\A}\to[0,\infty)$  given by
\begin{align*}
P_{j,k}:=\sum_{{x\in S(\T_{j}),\,}{a(x)=k}} p_{j,x},\qquad j,k\in\A,
\end{align*}
defines a stochastic matrix.

We define the \emph{contraction quantities} $c_{x}:\h^{S^{[\T]}}\to[-1,1]$ for  $x\in S_{o(j)}^{[\T]}$ by
\begin{align*}
c_{x}(g)&:=\sum\limits_{y\in S_{o(j)}^{[\T]}} q_{y}(g)Q_{x,y}(g)\cos\al_{x,y}(g)
\end{align*}
and for $x\in S_{o'(j)}^{[\T]}$ by
\begin{align*}
c_{x}(g)&:=\Big({\sum\limits_{y\in S_{o(j)}^{[\T]}} q_{y}(g)Q_{x,y}(g)\cos\al_{x,y}(g)}\Big) \Big({\sum\limits_{y\in S_{o'(j)}^{[\T]}} q_{y}(g)Q_{o'(j),y}(g)\cos\al_{o'(j),y}(g)}\Big),
\end{align*}
where $g_{o'}$ is defined in the spirit of \eqref{e:rec} as
\begin{align*}
g_{o'(j)}:=-\frac{1}{z-\deg(j)+\sum_{x\in S_{o'(j)}^{\T}}g_{x}}
\end{align*}
and $q_{y}$, $Q_{x,y}$ and $\al_{x,y}$ are taken from the previous subsection.
By inequality \eqref{e:sum}, we see  that $c_{x}$ takes indeed values in $[-1,1]$.

We now come to the two step expansion estimate. For the case $\te_j\in[\T_{j}]$ the following estimate is similar to Proposition~1 of \cite{KLW2}.
In the other cases, we get a similar expression with a multiplicative and additive error term.

\begin{lemma}\label{l:expansion}(Two step expansion estimate) Let $I\subset \Sigma$ be compact and  $j\in\A$. For all $z\in I+i(0,1]$ and  $\te\in[\T_{j}]$,
$${\gm({ \Gm_{o(j)}^{\te},\Gm_{o(j)}})}\leq\sum_{x\in S(\T_j)}p_{j,x}c_{x}(\Gm^{\te}) \gm(\Gm_{x}^{\te},\Gm_{a(x)}).$$
Moreover,  for all $p>1$ there is $c_2>0$ such that for all $z\in I+i(0,1]$ and all $\te\in\Te$
\begin{align*}
\EE_{j}\ab{\gm({ \Gm_{o(j)}^{(\cdot)},\Gm_{o(j)}})^{p}\mid [\te_{j}]}\leq c_2{\big({|S_{o(j)}^{[\te]}|}^{p}+{|S_{o'(j)}^{[\te]}|}^{p}\big)} \Big(\sum_{k\in\A} P_{j,k}\EE_{k}\big({\gm(\Gm_{o(k)}^{(\cdot)},\Gm_{k})^{p}}\big) +1\Big).
\end{align*}
\end{lemma}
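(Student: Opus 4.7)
The plan is to apply the one-step expansion estimate (Lemma~\ref{l:OSE}) twice---first at $o(j)$ and then, if $o'(j)$ exists, at $o'(j)$---and then pass to conditional expectations using Lemma~\ref{l:E}. The two statements of the lemma correspond to which of the two alternatives in Lemma~\ref{l:OSE} applies at each step.

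For the first statement, the hypothesis $\te \in [\T_j]$ forces $N_{v,k}^\te = M_{a(v),k}$ at both $v = o(j)$ and $v = o'(j)$, so the clean (first) alternative of Lemma~\ref{l:OSE} applies at each step. Iterating it at $o(j)$, the outer weight $\frac{\Im\Gm_{a(x)}}{\sum_k M_{j,k}\Im\Gm_k}$ at a first-sphere $x$ is exactly $p_{j,x}$; iterating once more at $o'(j)$, the product of the outer weight $\frac{\Im\Gm_j}{\sum_k M_{j,k}\Im\Gm_k}$ at $o'(j)$ with the inner weight $\frac{\Im\Gm_{a(y)}}{\sum_k M_{j,k}\Im\Gm_k}$ at a second-sphere $y$ reproduces the second-sphere definition of $p_{j,y}$, and the two contraction factors combine into $c_y(\Gm^\te)$ as in the definition. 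Summing over $x \in S(\T_j)$ gives the first inequality.

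For the second statement I would use the error version of Lemma~\ref{l:OSE} at both $o(j)$ and $o'(j)$ (which is valid in all cases). The crucial decision is to combine the two OSE inequalities for $\gm$ into a single bound \emph{before} raising to any $p$-th power, producing
\begin{align*}
    \gm(\Gm_{o(j)}^{\te},\Gm_{o(j)}) \leq c\Big(\sum_{v \in S(\te_j)} \pi_v\,\gm(\Gm_v^{\te},\Gm_{a(v)}) + |S_{o(j)}^{[\te]}| + |S_{o'(j)}^{[\te]}|\Big),
\end{align*}
where $\pi_v = \frac{\Im\Gm_{a(v)}}{\sum_k M_{j,k}\Im\Gm_k}$ for $v \in S_{o(j)}^\te \setminus \{o'(j)\}$ and $\pi_v = \frac{\Im\Gm_j\,\Im\Gm_{a(v)}}{(\sum_k M_{j,k}\Im\Gm_k)^2}$ for $v \in S_{o'(j)}^\te$. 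Since the $\Im\Gm_k$ are uniformly bounded above and below on $I+i(0,1]$, every $\pi_v$ is bounded by a constant, and the inner sum has at most $|S_{o(j)}^{[\te]}| + |S_{o'(j)}^{[\te]}|$ terms. Raising to the $p$-th power using $(a+b+c)^p \leq 3^{p-1}(a^p+b^p+c^p)$ together with Jensen's inequality $(\sum_{i=1}^n a_i)^p \leq n^{p-1}\sum a_i^p$ keeps the sphere sizes additive.

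Taking $\EE_j(\cdot \mid [\te_j])$, Lemma~\ref{l:E} converts each $\EE_j(\gm(\Gm_v^{(\cdot)},\Gm_{a(v)})^p \mid [\te_j])$ into $\EE_{a(v)}(\gm(\Gm_{o(a(v))}^{(\cdot)},\Gm_{a(v)})^p)$. Grouping by label, one uses that $\pi_v$ with $a(v)=k$ is bounded by a constant multiple of $P_{j,k}$ (because $P_{j,k}$ is a positive-integer linear combination of precisely these weights, and $M_{j,k} \geq 1$ for all $j,k$), and that at most $|S_{o(j)}^{[\te]}| + |S_{o'(j)}^{[\te]}|$ vertices of $S(\te_j)$ carry any single label. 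This yields the desired bound in the form $c_2(|S_{o(j)}^{[\te]}|^p+|S_{o'(j)}^{[\te]}|^p)(\sum_k P_{j,k}\EE_k(\gm^p)+1)$ after one further application of $(a+b)^p \leq 2^{p-1}(a^p+b^p)$. The main obstacle is exactly the decision to combine the two OSE inequalities before taking any $p$-th power: the naive alternative, raising to the $p$-th power separately after each OSE application, produces a cross term of order $|S_{o(j)}^{[\te]}|^{p-1}|S_{o'(j)}^{[\te]}|^p$ that cannot be bounded by any constant multiple of $|S_{o(j)}^{[\te]}|^p + |S_{o'(j)}^{[\te]}|^p$ when both sphere sizes grow.
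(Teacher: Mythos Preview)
Your proof is correct and follows the paper's strategy essentially verbatim: combine the two pointwise OSE bounds at $o(j)$ and $o'(j)$ into a single inequality, raise to the $p$-th power, apply Jensen, take the conditional expectation and use Lemma~\ref{l:E} to reduce to unconditioned expectations indexed by label. The only technical difference is in the Jensen step: the paper applies Jensen with the normalized weights $p_{j,x}M_{j,a(x)}/N_{v(x),a(x)}^{[\te]}$ (which sum to at most $1$) to obtain an intermediate matrix $\ow P_{j,k}$ and then shows $\ow P_{j,k}/P_{j,k}\le |S_{o(j)}^{[\te]}|^{p}+|S_{o'(j)}^{[\te]}|^{p}$, whereas you use the cruder $(\sum_{i=1}^{n}a_i)^{p}\le n^{p-1}\sum a_i^{p}$ together with the uniform two-sided bounds on $\Im\Gm_k$ to control $\pi_v/P_{j,a(v)}$ by a constant---this costs only a larger (but still explicit) $c_2$.
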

\begin{proof}The first statement follows directly by  combining the first statement  of the one step expansion estimate, Lemma~\ref{l:OSE}, for $v=o(j)$ and $v=o'(j)$.

For the second statement let $\te\in\Te$. By the second statement of Lemma~\ref{l:OSE} applied for $v=o(j)$ in the first estimate and for $v=o'(j)$ in the second estimate, we get
\begin{align*}
\EE_{j}\Big(\gm\big(\Gm_{o(j)}^{(\cdot)},\Gm_{o(j)}\big)^{p}\Big| [\te_{j}]\Big) &\leq c_{1}^{p}
\EE_{j}\Big(\Big(\sum_{x\in S_{o(j)}^{[\te]}} \tfrac{\Im\Gm_{a(x)}}{\sum_{k\in\A} M_{j,k}\Im\Gm_{k}} \gm(\Gm_{x}^{(\cdot)},\Gm_{a(x)})+{\big|S_{o(j)}^{[\te]}\big|} \Big)^{p} \Big| [\te_{j}]\Big) \\
&\leq c_{1}^{2p} \EE_{j}\Big(\Big(\sum_{x\in S({[\te_{j}]})}
p_{j,x}\gm(\Gm_{x}^{(\cdot)},\Gm_{a(x)}) +{\big|S_{o(j)}^{[\te]}\big|}+{\big|S_{o'(j)}^{[\te]}\big|}\Big)^{p} \Big| [\te_{j}]\Big).
\end{align*}
Note that in the case, where $o'(j)$ does not exist the second estimate is trivially true.
For $x\in S([\te_{j}])$, let $v(x)=o(j)$ if $x\in S_{o(j)}^{[\te]}$ and $v(x)=o'(j)$ if $x\in S_{o'(j)}^{[\te]}$. 
Since $\sum_{x\in S([\te_{j}])}p_{j,x}\tfrac{M_{j,a(x)}}{N_{v(x),a(x)}^{[\te]}}\leq1$, we get by Jensen's inequality and the inequality $(m+n)^{p}\leq 2^{p}(m^{p}+n^{p})$ for $m,n\geq0$,
\begin{align*}
\ldots\leq 2^{p}c_{1}^{2p}\Big(\sum_{x\in S({[\te_{j}]})}
p_{j,x}\Big(\frac{N_{v(x),a(x)}^{[\te]}}{M_{j,a(x)}}\Big)^{p-1} \EE_{j}\big(\gm(\Gm_{x}^{(\cdot)},\Gm_{a(x)})^{p}\big| [\te_{j}]\big) +\big|S_{o(j)}^{[\te]}\big|^{p}+\big|S_{o'(j)}^{[\te]}\big|^{p}\Big).
\end{align*}
By Lemma~\ref{l:E} and with $c_2:= 2^{p}c_{1}^{2p}$
\begin{align*}
\ldots=c_{2}\Big(\sum_{k\in\A}
\ow P_{j,k} \EE_{k}\big(\gm(\Gm_{o(k)}^{(\cdot)},\Gm_{k})^{p}\big) +\big(\big|S_{o(j)}^{[\te]}\big|+\big|S_{o'(j)}^{[\te]}\big|\big)^{p}\Big),
\end{align*}
where $\ow P_{j,k}:=\sum_{{x\in S([\te_{j}]),\,}{a(x)=k}} p_{j,x}(\frac{N_{v(x),k}^{[\te]}}{M_{j,k}})^{p-1}$, $j,k\in \A$. Since $p_{j,x}$ depends only on $a(x)$ and $v(x)$
\begin{align*}
\ow P_{j,k}=\sum_{{x\in S([\te_{j}]),\,}{a(x)=k}} p_{j,x}\Big(\frac{N_{v(x),k}^{[\te]}}{M_{j,k}}\Big)^{p-1}=\sum_{{x\in S([\T_{j}]),\,}{a(x)=k}} p_{j,x}\Big(\frac{N_{v(x),k}^{[\te]}}{M_{j,k}}\Big)^{p}.
\end{align*}
We finish the proof by estimating $\ow P_{j,k}/ P_{j,k}\leq\sum_{{x\in S([\T_{j}])},{a(x)=k}} ({N_{v(x),k}^{[\te]}})^{p}
\leq {{\big|S_{o(j)}^{[\te]}\big|}^{p}+{\big|S_{o'(j)}^{[\te]}\big|}^{p}}$ since $\sum_{x\in S_{v}^{[\te]}} N_{v,a(x)}^{[\te]}=\big|S_{v}^{[\te]}\big|$ and $\sum_{k}M_{j,k}\geq1$.
\end{proof}


\subsection{The averaged contraction coefficient and the uniform contraction estimate}
In \cite[Proposition~2]{KLW2} a uniform contraction  estimate is proven, which also plays an important role in the proof our main theorem. We recall the corresponding definitions and the statement from  \cite{KLW2}.


\begin{definition}\label{d:Pi}(Label invariant permutations) For $o\in \T_{j}$, we define the set of \emph{label invariant permutations } $\Pi:=\Pi_{j}$ of $S(\T_{j})$ by
\begin{align*}
\Pi:=\{\pi:S(\T_{j})\to S(\T_{j})\mid \mbox{bijective and $a(\pi(x))=a(x)$ for all $x\in S(\T_{j})$ }\}.
\end{align*}
\end{definition}

For $g\in\h^{S(\T_{j})}$  the composition $g\circ\pi$ is of course given as $g\circ\pi=(g_{\pi(x)})_{x\in S(\T_{j})}$. Let  the restriction of the Green function of $\te_{j}$ to $S([\te_{j}])$ be given by $\Gm_{S([\te_{j}])}^{\te}=(\Gm_{x}(z,\Delta(\te)))_{x\in S([\te_{j}])}$.

By Lemma~\ref{l:E}, we have
$$\EE(f(\Gm_{S(\T_{j})}^{(\cdot)})\mid [\T_{j}])=    \EE(f(\Gm_{S(\T_{j})}^{(\cdot)}\circ\pi)\mid[\T_{j}])$$
for  $\pi\in\Pi$ and any function $f$ such that $\te\mapsto f(\Gm_{S(\T_{j})}^{\te})$ is integrable.
The following definition of the averaged contraction coefficient is also taken from \cite{KLW2}.

\begin{definition}(Averaged contraction coefficient) For $p>1$ and $z\in\h$, let the \emph{averaged contraction coefficient} $\ka_j^{(p)}:\h^{S(\T_{j})}\to\R$ be defined as
\begin{align*}
\ka_j^{(p)}(g)&:=\frac{\sum_{\pi\in\Pi}
\ab{\sum_{x\in S(\T_{j})} p_{j,x}c_{x}(g\circ\pi)\gm(g_{\pi(x)},\Gm_{a(x)})}^p} {\sum_{\pi\in\Pi} \sum_{x\in S(\T_{j})}p_{j,x}{\gm(g_{\pi(x)},\Gm_{a(x)})}^{p}}.
\end{align*}
\end{definition}
Note that $z$ enters the definition of $\ka_{j}^{(p)}$ via  the unperturbed Green functions in  the quantities $p_x$, $c_{x}$ and $\gm(g_{x},\Gm_{a(x)})$.

The following proposition is a special case of Proposition~2 of \cite{KLW2}.

\begin{prop}\label{p:ka}(Uniform contraction estimate, \cite[Proposition~2]{KLW2})
Let $I\subset \Sigma$ be compact and $p>1$. There exist $\de_{0}=\de_{0}(I,p)>0$ such that for all $z\in I+i(0,1]$
\begin{align*}
\max_{j\in\A}\sup_{g\in \h^{S(\T_{j})}}\ka_{j}^{(p)}(g)\leq 1-\de_{0}.
\end{align*}
\end{prop}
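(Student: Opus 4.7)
The plan is to extract a uniform positive gap in $\ka_j^{(p)}(g)\le 1-\de_0$ by combining three ingredients: the triangle inequality, Jensen's inequality for $t\mapsto t^p$ (strictly convex since $p>1$), and the bound $|c_x(g)|\le 1$ coming from \eqref{e:sum}. The role of the symmetrization over $\pi\in\Pi$ is to prevent these three inequalities from being simultaneously tight.

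Writing $r_x:=\gm(g_x,\Gm_{a(x)})$ and $R_\pi:=\sum_x p_{j,x}r_{\pi(x)}$, I would first use compactness of $I\subset\Sigma$ together with the continuity of the deterministic Green functions $\Gm_k$ on $\Sigma\cup\h$ and $\Im\Gm_k>0$ on $\Sigma$ to fix uniform constants $0<c_*\le\Im\Gm_k(z)\le|\Gm_k(z)|\le C_*$ for $k\in\A$ and $z\in I+i(0,1]$. Under these bounds the coefficients $p_{j,x}$ and the functions $Q_{x,y}$, $\cos\al_{x,y}$ are uniformly controlled. The chain
\begin{align*}
\Big|\sum_x p_{j,x}c_x(g\circ\pi)r_{\pi(x)}\Big|^p\le\Big(\sum_x p_{j,x}|c_x(g\circ\pi)|r_{\pi(x)}\Big)^p\le R_\pi^p\le\sum_x p_{j,x}r_{\pi(x)}^p
\end{align*}
then gives the trivial bound $\ka_j^{(p)}(g)\le 1$, and the task reduces to making this chain strict by a uniformly positive amount.

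The key is a dichotomy on the configuration $(r_x)$. If the tuple $(r_{\pi(x)})_x$ is not close to constant, the strict convexity of $t\mapsto t^p$ produces a quantitative Jensen gap $R_\pi^p\le(1-\rho)\sum_x p_{j,x}r_{\pi(x)}^p$ with $\rho>0$ depending only on $p$ and the spread. In the opposite regime, where the $r_x$'s are nearly equal, the gap must come from $|c_x|$: simultaneous near-equality $|c_x(g\circ\pi)|\approx 1$ for every $x$ would force, via strict AM--GM inside $Q_{x,y}$, that the product $\Im g_{\pi(x)}\cdot\Im\Gm_{a(x)}\cdot r_{\pi(x)}$ be nearly constant in $x$, and via $|\cos\al_{x,y}|\approx 1$ that all differences $g_{\pi(x)}-\Gm_{a(x)}$ be real-aligned. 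These rigidity conditions cut out a low-dimensional locus in configuration space, and the symmetrization over $\Pi$ rules out their simultaneous satisfaction for every $\pi$, since label-respecting permutations that swap a vertex of $S_{o(j)}$ with one of the same label in $S_{o'(j)}$ reshuffle the weight structure entering $c_x$ while the reference values $\Gm_{a(\cdot)}$ remain fixed.

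The main obstacle is converting this qualitative dichotomy into a genuinely uniform $\de_0>0$ on the noncompact domain $g\in\h^{S(\T_j)}$. I would handle this by projectivization: both numerator and denominator of $\ka_j^{(p)}$ are positively homogeneous of the same degree under uniform rescaling of the $r_x$'s, so one can normalize $\max_x r_x=1$ and restrict to configurations satisfying this constraint. On the resulting compactified parameter space, with the convention $Q_{x,y}=\cos\al_{x,y}=0$ extending $c_x$ continuously across the degenerate boundary $r_x=0$, the function $\ka_j^{(p)}$ attains its supremum. The dichotomy forbids this supremum from equaling $1$, yielding a strict bound $\ka_j^{(p)}\le 1-\de_0$ for each fixed $z$, and joint continuity of $\Gm_k$ together with compactness of $I$ upgrades this to uniformity in $z\in I+i(0,1]$, giving $\de_0=\de_0(I,p)>0$.
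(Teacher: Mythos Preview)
The paper does not actually prove this proposition: it is stated as a special case of \cite[Proposition~2]{KLW2} and no argument is given here. So there is no in-paper proof to compare your proposal against; your sketch has to stand on its own.

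Your overall architecture---trivial bound $\ka_j^{(p)}\le 1$ via $|c_x|\le 1$ plus Jensen, then a dichotomy to force a quantitative gap, then compactness to get uniformity---is the right shape. The substantive gap is in the compactification step. You assert that numerator and denominator of $\ka_j^{(p)}$ are homogeneous of the same degree under ``uniform rescaling of the $r_x$'s'' and hence that normalizing $\max_x r_x=1$ yields a compact parameter space. But $\ka_j^{(p)}$ is a function of $g\in\h^{S(\T_j)}$, and the contraction quantities $c_x(g)$ depend on $g$ through $q_y(g)=\Im g_y/\sum_u\Im g_u$, through $Q_{x,y}(g)$, and through the phases in $\cos\al_{x,y}(g)$---none of which are determined by the tuple $(r_x)_x$. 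There is no well-defined operation ``scale all $r_x$ by $\lm$ while leaving $c_x$ fixed'': moving $g_x$ so that $r_x\mapsto\lm r_x$ generically changes $\Im g_x$ and hence $q_x$ and $Q_{x,y}$. If instead you relax and treat $(r_x)$ and $(c_x)$ as independent variables with $|c_x|\le1$, the domain is compact but the supremum is then trivially $1$ (take all $r_x$ equal and all $c_x=1$), so the constraints coming from $g$ are exactly what you need and cannot be discarded. The set $\{g\in\h^{S(\T_j)}:\max_x r_x=1\}$ is still noncompact, and you have not explained how to compactify it in a way that keeps $\ka_j^{(p)}$ continuous up to the boundary while excluding $\ka_j^{(p)}=1$ there.

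A second, related gap is that your rigidity argument is only heuristic. You say that near-equality in AM--GM and in the phase alignment ``cut out a low-dimensional locus'' and that label-invariant permutations swapping vertices between $S_{o(j)}$ and $S_{o'(j)}$ destroy the simultaneous rigidity. This is where the two-step structure and the definition of $g_{o'(j)}$ via the recursion \eqref{e:rec} must enter, but you do not use them. Without a concrete mechanism explaining why no $g$ can make $c_x(g\circ\pi)\approx1$ for all $x$ and all $\pi\in\Pi$ while simultaneously equalizing the $p_{j,x}r_{\pi(x)}$, the argument does not close. The actual proof in \cite{KLW2} carries out precisely this analysis; your sketch identifies the right tension but does not resolve it.
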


A consequence of the proposition above is that
$\ka_{j}^{(p)}(\Gm^{\te}_{S([\te_{j}])})\leq 1-\de_{0}$  for all $\te_{j}\in[\T_{j}]$.

\begin{Remark} In the case, where $\T$ is a regular tree, it suffices to consider  a subset of $\Pi$ containing the identity and a permutation that only interchanges two vertices in different spheres. In this way, we can improve the estimate of the constant $p_{K}$ mentioned in Remark~(c) below the main theorem  by replacing $|\Pi|=(2K-1)!$ by $2$.
\end{Remark}


\subsection{The vector inequality}\label{s:VI}

 The goal of this section is to use the two step expansion estimate, Lemma~\ref{l:expansion}, and the uniform contraction estimate, Proposition~\ref{p:ka}, from the previous sections to prove the following vector inequality.

For $p>1$ and  $z\in\h$  denote
\begin{align*}
\EE\gm:=\left(\EE\left( \gm(\Gm_{o(j)}^{(\cdot)},\Gm_{o(j)}^{\T})^{p} \right)\right)_{j\in\A}=\left(\EE\left( \gm(\Gm_{o(j)}(z,\Lp(\te)),\Gm_{o(j)}(z,\Lp(\T)))^{p} \right)\right)_{j\in\A}
\end{align*}
and recall the definition of the stochastic matrix $P$  given by $P_{j,k}=\sum_{{x\in S(\T_{j}),\,}{a(x)=k}} p_{j,x},$ $j,k\in\A$ in Section~\ref{s:expansion}.

\begin{lemma}\label{l:VI}(Vector inequality) Let $I\subset \Sigma$ be compact  and $p>1$. Then there are constants $\eps,\de> 0$ and a vector $C\in[0,\infty)^{\A}$,  such that
\begin{align*}
    \EE\gm\leq(1-\eps) P\EE\gm + C
\end{align*}
for all $z\in I+i(0,1]$ and all $b\in \W_{p}$ with $d_p(b,b_M)<\de$.
\end{lemma}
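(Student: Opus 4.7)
The plan is to prove the inequality componentwise by conditioning on the equivalence class $[\te_j]$ from Section~\ref{ss:equiv}. For each $j\in\A$ I would split
\begin{align*}
(\EE\gm)_j = \PP_j([\T_j])\,\EE_j\bigl(\gm(\Gm_{o(j)}^{(\cdot)},\Gm_{o(j)})^{p}\,\big|\,[\T_j]\bigr) + \sum_{[\te_j]\neq[\T_j]} \PP_j([\te_j])\,\EE_j\bigl(\gm(\Gm_{o(j)}^{(\cdot)},\Gm_{o(j)})^{p}\,\big|\,[\te_j]\bigr),
\end{align*}
and treat the two parts separately. On the ``good'' event $[\te_j]=[\T_j]$ I would apply the first (equality-type) conclusion of Lemma~\ref{l:expansion}, raise it to the $p$-th power, then average over $\Pi$. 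Because by Lemma~\ref{l:E} (and independence of forward trees of distinct vertices) the joint distribution of $(\Gm_x^{(\cdot)})_{x\in S(\T_j)}$ under $\PP_j(\cdot\mid[\T_j])$ is $\Pi$-invariant, this averaging changes nothing, and the definition of $\ka_j^{(p)}$ combined with Proposition~\ref{p:ka} yields
\begin{align*}
\EE_j\bigl(\gm(\Gm_{o(j)}^{(\cdot)},\Gm_{o(j)})^{p}\,\big|\,[\T_j]\bigr) \leq (1-\de_0) \sum_{x\in S(\T_j)} p_{j,x}\,\EE_j\bigl(\gm(\Gm_x^{(\cdot)},\Gm_{a(x)})^{p}\,\big|\,[\T_j]\bigr) = (1-\de_0)(P\EE\gm)_j,
\end{align*}
the last equality coming from Lemma~\ref{l:E}.

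For the ``bad'' classes I would apply the second conclusion of Lemma~\ref{l:expansion}, giving the bound $c_2(|S_{o(j)}^{[\te]}|^p+|S_{o'(j)}^{[\te]}|^p)((P\EE\gm)_j+1)$ on the conditional expectation. Summing against the probabilities, the total error is controlled by
\begin{align*}
R_j := \sum_{[\te_j]\neq[\T_j]} \PP_j([\te_j])\bigl(|S_{o(j)}^{[\te]}|^p+|S_{o'(j)}^{[\te]}|^p\bigr).
\end{align*}
The key observation is that an equivalence class $[\te_j]$ corresponds to a pair $(n,m)\in(\N_0^{\A})^2$ of forward-neighbor configurations at $o(j)$ and $o'(j)$, with $\PP_j([\te_j])=\PP_j^{(b)}(n)\PP_j^{(b)}(m)$ (and $m=0$ if $n(j)=0$), $|S_{o(j)}^{[\te]}|^p+|S_{o'(j)}^{[\te]}|^p=\|n\|^p+\|m\|^p$, while $[\T_j]$ corresponds to the unique pair $(s^{(j)},s^{(j)})$ with $s^{(j)}(k)=M_{j,k}$. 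Since $b_M$ is deterministic on $s^{(j)}$, the distance condition $d_p(b,b_M)<\de$ delivers both $|\PP_j^{(b)}(s^{(j)})-1|<\de$ and $\sum_{s\neq s^{(j)}}\PP_j^{(b)}(s)\|s\|^p<\de$; splitting $R_j$ into the cases $n(j)=0$ and $n(j)\geq 1$ with $(n,m)\neq(s^{(j)},s^{(j)})$ and using these estimates gives $R_j \leq c_3 \de$, with $c_3$ depending only on $M$ and $p$.

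Combining the two contributions yields
\begin{align*}
(\EE\gm)_j \leq \bigl((1-\de_0) + c_2 R_j\bigr)(P\EE\gm)_j + c_2 R_j \leq \bigl((1-\de_0) + c_2 c_3 \de\bigr)(P\EE\gm)_j + c_2 c_3 \de.
\end{align*}
Choosing $\de$ so small that $c_2 c_3 \de \leq \de_0/2$ produces the desired vector inequality with $\eps := \de_0/2$ and $C_j := c_2 c_3 \de$; uniformity in $z\in I+i(0,1]$ follows because the constants $\de_0, c_2, c_1$ in Lemma~\ref{l:expansion} and Proposition~\ref{p:ka} are already uniform.

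I expect the main obstacle to be the careful bookkeeping of $R_j$: the ``bad'' event contains both the case where $o'(j)$ does not even exist (forcing us to treat $S_{o'(j)}^{[\te]}=\emptyset$ separately) and the more delicate case where $o'(j)$ exists but either it or the root fails to branch as in $\T_j$, and it is essential that the single $p$-th moment condition encoded in $d_p$ simultaneously dominates the $\|n\|^p$ and $\|m\|^p$ contributions from both levels. The secondary subtlety is that Lemma~\ref{l:E} as stated only gives pair exchangeability, whereas the reduction to $\ka_j^{(p)}$ needs full $\Pi$-invariance of the joint law; this however follows from the same proof, exploiting independence of the branching of distinct forward trees.
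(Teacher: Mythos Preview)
Your proposal is correct and follows essentially the same route as the paper: the same conditioning on $[\te_j]$, the same use of the first part of Lemma~\ref{l:expansion} together with $\Pi$-averaging and Proposition~\ref{p:ka} on the good event, and the second part of Lemma~\ref{l:expansion} on the bad classes. The only cosmetic difference is in the bookkeeping of $R_j$: the paper uses the inequality $m+n\le 2mn$ to pass to a product and obtains the explicit bound $2d_p(b,b_M)^2+C_0\,d_p(b,b_M)$, whereas you split directly into the sub-cases $n(j)=0$, $n\neq s^{(j)}$, and $n=s^{(j)},\,m\neq s^{(j)}$ to get a linear bound $c_3\de$; both are valid and lead to the same conclusion, and your remark that Lemma~\ref{l:E} must be upgraded from pair exchangeability to full $\Pi$-invariance (via independence of distinct forward trees) is a point the paper uses implicitly.
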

\begin{proof}
Let $I\subset \Sigma$ be compact and $p>1$. Moreover, let $b$ be a branching process and $\Te=\Te^{(b)}$.
We let $z\in I+i(0,1]$ and $g=(\Gm_{x}(z,\Lp(\te)))_{x\in S(\te_{j})}$ for $\te\in[\T_j]$.
By \eqref{e:rec} and the definition of $g_{o'}$, we have $g_{o'}=\Gm_{o'}(z,\Lp(\te))$. Moreover, $g_x$ and $g_{y}$ are independent for all $x,y\in S(\te_{j})$ and they are additionally identically distributed if  $x$ and $y$  carry the same label. This gives, in particular, $\EE((\sum_{x\in S([\T_{j}])}p_{j,x}c_{x}(g)\gm(g_{x},\Gm_{a(x)}))^p|[\T_{j}]) =\EE((\sum_{x\in S([\T_{j}])} p_{j,x}c_{x}(g\circ {\pi})\gm(g_{\pi(x)},\Gm_{a(x)}))^p|[\T_{j}])$ for all $\pi\in\Pi$.
We use this together with the first part of the two step expansion estimate, Lemma~\ref{l:expansion}, the uniform contraction estimate, Proposition~\ref{p:ka}, and Lemma~\ref{l:E} to compute
\begin{align*}
\EE_{j}\big(\gm(\Gm_{o(j)}^{(\cdot)},\Gm_{j})^{p}\mid[\T_{j}]\big)
&\leq\EE_{j}\Big(\Big(\sum_{x\in S([\T_{j}])}p_{j,x}c_{x}(g)\gm(g_{x},\Gm_{a(x)})\Big)^p\Big|[\T_j] \Big) \\ &= \frac{1}{|\Pi|}\EE_{j}\Big({\sum_{\pi\in\Pi}\Big({\sum_{x\in S([\T_{j}])} p_{j,x}c_{x}(g\circ{\pi})\gm(g_{\pi(x)},\Gm_{a(x)})}\Big)^p\Big|[\T_j]}\Big) \\&= \frac{1}{|\Pi|}\EE_{j}\Big({\ka_{j}^{(p)}(g){\sum_{\pi\in\Pi}\sum_{x\in S([\T_{j}])}p_{j,x}\gm(g_{\pi(x)},\Gm_{a(x)})^p}\Big|[\T_j]} \Big) \\
&\leq(1-\de_{0})\frac{1}{|\Pi|}\sum_{\pi\in\Pi}\sum_{x\in S([\T_{j}])}p_{j,x}\EE_{j} \big(\gm(g_{\pi(x)},\Gm_{a(x)})^p|[\T_j] \big)\\
&=(1-\de_{0})\sum_{x\in S([\T_{j}])}p_{j,x}\EE_{j} \big(\gm(g_{x},\Gm_{a(x)})^p|[\T_j] \big)\\
&=(1-\de_{0})\sum_{k\in\A}P_{j,k} \EE_{k}\big(\gm(\Gm_{o(k)}^{(\cdot)},\Gm_{k})^p\big).
\end{align*}
Let $n_{j}^{[\te]}:={({|S_{o(j)}^{[\te]}|}^{p}+{|S_{o'(j)}^{[\te]}|}^{p})}$. Since the set $\{[\te_j]\mid\te_j\in\Te_{j}\}$ is  countable,  we can sum over it. By the estimate above and the second part of the two step expansion estimate, Lemma~\ref{l:expansion}, we get
\begin{align*}
\lefteqn{\EE_{j}\big(\gm(\Gm_{o(j)}^{(\cdot)},\Gm_{j})^{p}\big) =\PP_{j}([\T_{j}]) \EE_{j}\big(\gm(\Gm_{o(j)}^{(\cdot)},\Gm_{j})^{p}\mid[\T_j]\big)+ \sum_{[\te_j]\neq[\T_{j}]}\PP_{j}([\te_{j}]) \EE_{j}\big(\gm(\Gm_{o(j)}^{(\cdot)},\Gm_{j})^{p}\mid[\te_j]\big)}\\
&\leq \Big((1-\de_{0})\PP_{j}([\T_{j}]) +c_2 \sum_{[\te_j]\neq[\T_{j}]}\PP_{j}([\te_{j}])n_{j}^{[\te]} \Big) \sum_{k\in\A}P_{j,k} \EE_{k}\big(\gm(\Gm_{o(k)}^{(\cdot)},\Gm_{k})^p\big) +c_2\sum_{[\te_j]\neq[\T_{j}]}\PP_{j}([\te_{j}])n_{j}^{[\te]}.
\end{align*}
For $\te_{j}$ let $s_{v}^{[\te]}\in \N_{0}^{\A}$ be the configuration of $S_{v}^{[\te]}$ for $v\in\{o(j),o'(j)\}$, (that is the map $\A\to\N_0$ encoding the number of forward neighbors of each type).  Recall also that $\PP_{j}(s)$ denotes  the probability that a vertex with label $j\in\A$ has the forward neighbor configuration  $s\in \N_{0}^{\A}$.

As the distribution of the branching of the vertices in $S_{o'(j)}$ and $S_{o(j)}\setminus \{o(j)\}$ is independent, we observe that if $o'(j)$ exists in $\te_{j}$, then $$\PP_{j}([\te_{j}])=\PP\big(s_{o(j)}^{[\te]}\big) \PP \big(s_{o'(j)}^{[\te]}\big)$$ and $\PP_{j}([\te_{j}])=\PP_{j}(s_{o(j)}^{[\te]}) $ otherwise.
Using the inequality $m+n\leq 2mn$ for $m,n\geq1$ and the definition of $n_{j}^{[\te]}$ above, we compute
\begin{align*}
\sum_{[\te_{j}]\neq[\T_{j}]}\PP_{j}([\te_{j}])n_{j}^{[\te]}&= \sum_{[\te_{j}]\neq[\T_{j}]} \PP_{j}([\te_{j}]) \Big({\big|S_{o(j)}^{[\te]}\big|}^{p} +{\big|S_{o'(j)}^{[\te]}\big|}^{p}\Big)\\
&\leq 2\sum_{[\te_{j}]\neq[\T_{j}],S^{[\te]}_{o'(j)}\neq\emptyset} \hspace{-.35cm} \PP\big(s_{o(j)}^{[\te]}\big) \PP\big(s_{o'(j)}^{[\te]}\big) \big|S_{o(j)}^{[\te]}\big|^{p} \big|S_{o'(j)}^{[\te]}\big|^{p} + \hspace{-.1cm} \sum_{[\te_{j}]\neq[\T_{j}], S^{[\te]}_{o'(j)}=\emptyset} \hspace{-.35cm}\PP\big(s_{o(j)}^{[\te]}\big) \big|S_{o(j)}^{[\te]}\big|^{p}\\
&= 2\sum_{s,s'\in \N_{0}^{\A},s'\neq s_{o(j)}^{[\T]}} \PP_{j}(s)\PP_{j}(s') \|s\|^{p}\|s'\|^{p}+\sum_{s\in\N_{0}^{\A},s\neq s_{o(j)}^{[\T]}} \PP_{j}(s)\|s\|^{p}\\
&=2\Big(\sum_{s\neq s_{o(j)}^{[\T]}}\big|\PP_{j}(s)\big| \|s\|^{p} \Big)^{2} +\Big(1+\PP\big(s_{o(j)}^{[\T]}\big)\|s_{o(j)}^{[\T]}\|^{p}\Big) \sum_{s\neq s_{o(j)}^{[\T]}}\big|\PP_{j}(s)\big| \|s\|^{p} \\
&\leq 2d_{p}(b,b_M)^{2}+C_{0} d_{p}(b,b_M)
\end{align*}
with $C_{0}=1+(\sum_{k}M_{j,k})^{p}$.
We conclude
\begin{align*}
\EE_{j}(\gm(\Gm_{o(j)}^{(\cdot)},\Gm_{j})^{p})\leq\big(1-\de_{0} +c_{2}(2d_{p}(b,b_M)^{2}+C_0 d_{p}(b,b_M))\big)\sum_{k\in\A}P_{j,k} \EE_{k}\big(\gm(\Gm_{o(k)}^{(\cdot)},\Gm_{k})^p\big) +C.
\end{align*}
and the statement follows with the choice of $\de>0$ such that $\eps:=\de_{0}-c_{2}(2\de^{2}+C_0\de)>0$ and $C:=c_{2}(2\de^{2}+C_0\de)$.
\end{proof}


\section{Proof of the main theorem}\label{s:proof}

Denote by $S_{j}^{\te}(n)$ the set of vertices of $\te_j\in\Te_{j}$ that have distance $n\geq0$ to the root $o(j)$ and let $B_{j}^{{\te}}(n)=\bigcup_{k=0}^{n}S_{j}^{\te}(n)$.

\begin{lemma}\label{l:EEG} Let $I\subset \Sigma$ be  compact and $p>1$. Then, there is $\de> 0$,  such that for all $b\in\mathcal{W}_{p}$ with $d_p(b,b_M)<\de$ and all $n\geq0$
\begin{align*}
\sup_{z\in I+i(0,1]} \EE\Big({\tfrac{1}{|S_{j}^{(\cdot)}(n)|}\sum_{x\in S_{j}^{(\cdot)}(n)} \gm(G_{x}(z,\Lp(\cdot)),\Gm_{a(x)}(z,\Lp(\T)))^{p}}\Big) <\infty,\qquad j\in\A.
\end{align*}
\end{lemma}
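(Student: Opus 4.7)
The plan is to reduce the claim to a uniform bound on the root-level quantities
\[
F_k^{(p)}(z) := \EE_k\bigl(\gm(\Gm_{o(k)}^{(\cdot)}, \Gm_k)^p\bigr), \qquad k\in\A,
\]
and then derive this bound by combining the vector inequality of Lemma~\ref{l:VI} with a truncation argument that supplies a priori finiteness. By the Galton--Watson structure, for any realization $\te$ and any vertex $x \in S_j^\te(n)$ the forward subtree $\te_x$ rooted at $x$ is distributed as $\PP_{a(x)}$ independently of the ball $B_j^\te(n)$; since $\Gm_x^\te$ depends only on $\te_x$, conditioning on the $\sigma$-algebra generated by $B_j^{(\cdot)}(n)$ and the labels yields
\[
\EE_j\Bigl(\tfrac{1}{|S_j^{(\cdot)}(n)|}\sum_{x\in S_j^{(\cdot)}(n)} \gm(\Gm_x^{(\cdot)}, \Gm_{a(x)})^p\Bigr) = \EE_j\Bigl(\tfrac{1}{|S_j^{(\cdot)}(n)|}\sum_{x\in S_j^{(\cdot)}(n)} F_{a(x)}^{(p)}(z)\Bigr) \leq \max_{k\in\A} F_k^{(p)}(z),
\]
so it suffices to bound $\max_k F_k^{(p)}(z)$ uniformly in $z \in I + i(0,1]$.

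To obtain the required a priori finiteness, introduce depth-$n$ truncated Green functions $\Gm^{(n),\te}_v$, defined by the recursion \eqref{e:rec} at every vertex of depth strictly less than $n$ and with boundary condition $\Gm^{(n),\te}_v := \Gm_{a(v)}$ for $v$ at depth exactly $n$ (so $\gm(\Gm^{(n),\te}_v, \Gm_{a(v)}) = 0$ on the truncation boundary). The proofs of Lemma~\ref{l:expansion} and Lemma~\ref{l:VI} use only the recursion at $o(j)$ and $o'(j)$, properties of the unperturbed $\Gm_k$, and Proposition~\ref{p:ka}; they transfer verbatim to these truncated Green functions. Writing $F_j^{(n,p)}(z) := \EE_j(\gm(\Gm^{(n),(\cdot)}_{o(j)}, \Gm_j)^p)$, this yields, for $n \geq 2$, the mixed-depth vector inequality
\[
F^{(n,p)}(z) \leq (1-\eps)\bigl(P^{(1)}(z) F^{(n-1,p)}(z) + P^{(2)}(z) F^{(n-2,p)}(z)\bigr) + C,
\]
where $P^{(1)}, P^{(2)}$ denote the contributions to $P(z)$ from the vertices of $S(\T_j)$ sitting at depths $1$ and $2$ respectively, so that $P = P^{(1)} + P^{(2)}$. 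The base cases $F^{(0,p)} \equiv 0$ and $F_j^{(1,p)} \leq c_1^p\, \EE_j|S_{o(j)}^{(\cdot)}|^p$ (the latter obtained from the second case of Lemma~\ref{l:OSE} applied at $v = o(j)$ with $\gm \equiv 0$ one sphere below, which is finite by the $\W_p$-hypothesis) are uniformly bounded in $z \in I + i(0,1]$. A straightforward induction in $n$ using the sub-stochasticity of $P^{(1)}, P^{(2)}$ then gives $F^{(n,p)}(z) \leq M(z)$ for all $n \geq 0$, where $M(z) \in [0,\infty)^\A$ is bounded uniformly in $z$ by continuity of $\Gm_k$ on the closure of $I + i(0,1]$ (guaranteed by $I \subset \Sigma$).

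Finally, the pointwise convergence $\Gm^{(n),\te}_{o(j)}(z) \to \Gm^\te_{o(j)}(z)$ as $n\to\infty$ --- standard for the tree-recursion expansion on $\h$ via hyperbolic-contraction estimates for M\"obius maps at $\Im z > 0$ --- together with Fatou's lemma yields
\[
F_j^{(p)}(z) \leq \liminf_n F_j^{(n,p)}(z) \leq M_j(z),
\]
uniformly in $z$, which combined with the first paragraph proves the lemma. The principal technical obstacle is the faithful transfer of Lemma~\ref{l:VI} to the truncated Green functions, in particular the bookkeeping of the mixed depths $n-1$ and $n-2$ and the verification that the constants $\eps$ and $C$ do not degrade under truncation; a minor subsidiary point is confirming that the convergence $\Gm^{(n),\te} \to \Gm^\te$ holds in a form strong enough to apply Fatou.
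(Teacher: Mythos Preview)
Your reduction in the first paragraph conflates two different objects: the full Green function $G_x(z,\Lp(\te_j))=\langle\de_x,(\Lp(\te_j)-z)^{-1}\de_x\rangle$, which is what the lemma is about, and the truncated Green function $\Gm_x^\te=G_x(z,\Lp((\te_j)_x)+\be_x)$, which is what you actually bound. For $x\in S_j^\te(n)$ with $n\ge1$, the quantity $G_x(z,\Lp(\te_j))$ depends on the \emph{entire} tree $\te_j$, in particular on the backward ball $B_j^\te(n)$, so your independence argument (``$\Gm_x^\te$ depends only on $\te_x$'') simply does not apply to $G_x$. Even for $n=0$ the two differ by the rank-one boundary term $\be_{o(j)}$. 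In the paper this is precisely the part of the proof that costs the most work: after the root-level bound \eqref{e:Gm} (your $F_k^{(p)}$), one re-roots the tree at $x$, expands $G_x^\te$ by the recursion \eqref{e:rec} through the finite ball until every term is a truncated Green function $\Gm_y^\te$ with $y\in S_j^\te(n+1)$, iterates the one-step expansion Lemma~\ref{l:OSE} on the order of $2n+1$ times to get
\[
\gm(G_x^\te,\Gm_{a(x)})\le c_1^{2n+1}\Big(\sum_{y\in S_j^\te(n+1)}\gm(\Gm_y^\te,\Gm_{a(y)})+|B_j^\te(n+1)|\Big),
\]
and then conditions on equivalence classes determined by $B_j^\te(n+1)$, uses Jensen and the $\W_p$-moment hypothesis to control $\EE_j(|S_j^{(\cdot)}(n+1)|^p)$ and $\EE_j(|B_j^{(\cdot)}(n+1)|^p)$, and finally invokes \eqref{e:Gm}. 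Your argument supplies none of this.

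Regarding the auxiliary bound $\sup_z F_k^{(p)}(z)<\infty$ itself: the paper obtains it in one line from the vector inequality of Lemma~\ref{l:VI} by pairing with the Perron--Frobenius left eigenvector of the stochastic irreducible matrix $P$, which gives $\langle u,\EE\gm\rangle\le(1-\eps)\langle u,\EE\gm\rangle+C$ and hence $\EE_j\gm\le C/(\eps u_j)$. Your truncation scheme is a legitimate alternative and has the merit of building in a priori finiteness (which the Perron--Frobenius argument tacitly assumes for each fixed $\eta>0$), but it is considerably more elaborate and, as you note, requires checking that the constants in Lemma~\ref{l:VI} survive truncation and that $\Gm^{(n),\te}\to\Gm^\te$ pointwise on $\h$. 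Those points are manageable; the missing $G_x$-versus-$\Gm_x$ step is not optional.
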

\begin{proof}
Let $P=P(z)$ be the stochastic matrix defined in Section~\ref{s:expansion} for $z\in\h\cup\Sigma$. Since the entries of $M$ are positive, so are the entries of $P$. Hence, $P$ is an irreducible stochastic matrix and by the Perron Frobenius theorem there is a positive left eigenvector $u=u(z)\in\R^{\A}$, $z\in I+i[0,1]$, such that $P^{\top}u=u$. We obtain using the vector inequality, Lemma~\ref{l:VI},
\begin{align*}
   \as{u, \EE\gm} \leq(1-\eps) \as{u,P\EE\gm} + C=(1-\eps) \as{u,\EE\gm} + C
\end{align*}
for all $z\in I+i(0,1]$ and all  $b\in\mathcal{W}_{p}$ with $d_p(b,b_M)<\de$, where $\eps,\de>0$ and $C$ are given by  Lemma~\ref{l:VI}.  This implies, in particular, that the $j$-th component of $\EE\gm$ can be estimated by $C/(u_{j}\eps)$ for all $j\in\A$. Since $P$ is continuous on $\h\cup\Sigma$ in $z$, so is $u$. As  $u$ is positive on the compact set $I+i[0,1]$, we have
\begin{align}\label{e:Gm}\tag{$\heartsuit$}
\sup_{z \in I+i(0,1]}\EE(\gm(\Gm_{o(k)}^{(\cdot)},\Gm_{k})^{p})<\infty,\qquad k\in\A.
\end{align}
We now want to use \eqref{e:Gm} to estimate $\gm(G_{x}^{(\cdot)},\Gm_{a(x)})^{p}$ for $x\in S_{j}^{\te}(n)$, $n\geq0$. We will look for a moment at $x$ as the new root of the tree $\te_{j}$. By \eqref{e:rec}, we can expand $G_{y}^{\te}=G_{y}(z,\Lp(\te))$ in terms of its forward neighbors in the tree $\te_{j}$ with respect to the root $x$. For those neighbors of $x$ that are not in $S_{j}^{\te}(n+1)$ we  apply again \eqref{e:rec}. We can do this successively to see that $G^{\te}_{x}$ is a function of  the truncated Green functions $\Gm^{\te}_{y}$, $y\in S_{j}^{\te}(n+1)$.
In this sense, applying the one step expansion estimate, Lemma~\ref{l:OSE},  sufficiently often, we obtain by  estimating all terms $\Im\Gm_{k}/\sum M_{k,l}\Im \Gm_{l}\leq1$
\begin{align*}
\gm(G_{x}^{\te},\Gm_{a(x)})\leq c_{1}^{2n+1}\Big(\sum_{y\in S^{\te}_{j}({n+1})} \gm(\Gm_{y}^{\te},\Gm_{a(y)})+|B_{j}^{\te}(n+1)|\Big).
\end{align*}

We say  $\te_j$ and $\te_{j}'$ are isomorphic up to sphere $n$ if there is a label preserving graph isomorphism between $S_{j}^{\te}({n})$ and $S_{j}^{\te'}({n})$. This defines an equivalence relation on $\Te_{j}$. We denote the equivalence classes by $[\te_{j}]_{n}$.

Now, using the estimate above, we get
\begin{align*}
\lefteqn{\EE_{j}\Big(\tfrac{1}{|S_{j}^{(\cdot)}(n)|} \sum_{x\in S^{(\cdot)}_{j}(n)} \gm(G_{x}^{(\cdot)},\Gm_{a(x)})^{p}\Big) =\sum_{{[\te]}_{n+1}} \tfrac{\PP_{j}({[\te]}_{n+1})}{|S_{j}^{{[\te]}_{n+1}}(n)|} \EE_{j}\Big(\sum_{x\in S^{(\cdot)}_{j}(n)} \gm(G_{x}^{{(\cdot)}},\Gm_{a(x)})^{p} \Big\vert {[\te]}_{n+1}\Big) }\\
&\leq c_{1}^{(2n+1)p} \sum_{[\te_{j}]_{n+1}}\tfrac{\PP_{j}({[\te]}_{n+1})}{|S_{j}^{{[\te]}_{n+1}}(n)|} \EE_{j}\Big(\sum_{x\in S^{(\cdot)}_{j}(n)}\Big( \sum_{y\in S^{(\cdot)}_{j}({n+1})} \gm(\Gm_{y}^{{(\cdot)}},\Gm_{a(y)}) +|B_{j}^{(\cdot)}(n+1)|\Big)^{p}\Big\vert {[\te]}_{n+1}\Big)\\
&=c_{1}^{(2n+1)p} \sum_{[\te_{j}]_{n+1}}\PP_{j}({[\te]}_{n+1}) \EE_{j}\Big( \Big( \sum_{y\in S^{(\cdot)}_{j}({n+1})} \gm(\Gm_{y}^{{(\cdot)}},\Gm_{a(y)}) +|B_{j}^{(\cdot)}(n+1)|\Big)^{p}\Big\vert {[\te]}_{n+1}\Big).
\end{align*}
By  the inequality $(n+m)^{p}\leq 2^{p-1}(n^{p}+m^{p})$ and Jensen's inequality, we proceed
\begin{align*}
\ldots&\leq C \sum_{[\te_{j}]_{n+1}}\PP_{j}({[\te]}_{n+1}){|S_{j}^{[\te]_{n+1}}(n+1)|^{p-1}}
\EE_{j}\Big( \sum_{y\in S_{j}^{(\cdot)}({n+1})}  \gm(\Gm_{y}^{(\cdot)},\Gm_{a(y)})^{p} +|B_{j}^{(\cdot)}(n+1)|^{p}\Big\vert {[\te]}_{n+1}\Big),
\end{align*}
where $C:=2^{p}c_{1}^{(2n+1)p}$. Since the distribution of $\Gm_{y}^{\te}$ for $y\in S_{j}^{\te}(n+1)$ is independent of the number of vertices in $S_{j}^{\te}(n+1)$ and depends only on $a(y)$, we get
\begin{align*}
\ldots&\leq C \EE_{j}\big(|S_{j}^{(\cdot)}(n+1)|^{p}\big)
\sum_{k\in\A}\EE_{k}\Big(\gm(\Gm_{o(k)}^{(\cdot)},\Gm_{k})^{p} \Big) +C\EE_{j}\big(|B_{j}^{(\cdot)}(n+1)|^{p}\big).
\end{align*}
Since $\EE_{j}(|B_{j}^{(\cdot)}(n+1)|^{p})<\infty$ by $b\in \mathcal{W}_{p}$, we conclude the statement of the lemma from \eqref{e:Gm}.
\end{proof}

With Lemma~\ref{l:EEG} at hand, the proof of the main theorem can be obtained using the techniques of \cite{FHS2,KLW2}.

\begin{proof}[Proof of the main theorem] Let $\Sigma_{0}$ be the finite set introduced in Section~\ref{ss:basic}, i.e., $\Gm_{j}$, $j\in\A$, are continuous functions $\h\cup\Sigma\to\h$ with $\Sigma=\si_{\mathrm{ac}}(\Delta(\T)+\beta)\setminus\Sigma_{0}$. By the general theory of rank one perturbations, see \cite[Theorem~II.1]{Si}, the absolutely continuous spectra  of  $\Lp(\te_{j})+\be_{o(j)}$ and $\Lp(\te_{j})$ coincide. Hence, $\Sigma=\si_{\mathrm{ac}}(\Delta(\T))\setminus\Sigma_{0}$.

Let now $I\subset \Sigma$, $p>1$, $b\in\W_{p}$ such that $d_p(b,b_M)<\de$ where $\de>0$ is taken from Lemma~\ref{l:EEG}. Let $j\in\A$ and $\Te=\Te^{(b)}$.
We employ the inequality
$\mo{\xi}\leq 4\gm(\xi,\zeta)\Im \zeta+2|\zeta|$, $\xi,\zeta\in\h,$
from \cite{FHS2} (see inequality (9) in the proof of Theorem~2).
This inequality applied with $\xi=G_{x}(z,\Lp({\te}))$ and $\zeta=\Gm_{a(x)}(z,\Lp(\T))$ combined with Lemma~\ref{l:EEG}  gives for all $n\in\N$
\begin{align*}
\sup_{z\in I+i(0,1]} \EE\Big(\tfrac{1}{|S_{j}^{(\cdot)}(n)|}\sum_{x\in S_{j}^{(\cdot)}(n)}{|G_{x}(z,\Lp({\cdot}))|}^{p}\Big)<\infty,
\end{align*}
since $\Gm_{a(x)}(z,\Lp(\T))$ is bounded on any compact subset of $\Sigma$ as it is continuous on $\Sigma$.

By Fatou's lemma and Fubini's theorem, we obtain
\begin{align*}
\lefteqn{\EE_{j}\Big(\tfrac{1}{|S_{j}^{(\cdot)}(n)|}\sum_{x\in S_{j}^{(\cdot)}(n)} {\liminf_{\eta\downarrow0}\int_{I} {|G_{x}(E+i\eta,\Lp(\cdot))|}^{p}dE}\Big)}\\&\leq \liminf_{\eta\downarrow0}\int_{I} \EE_{j}\Big(\tfrac{1}{|S_{j}^{(\cdot)}(n)|} \sum_{x\in S_{j}^{(\cdot)}(n)}|G_{x}(E+i\eta,\Lp(\cdot))|^{p}\Big)dE\\
&\leq\sup_{z\in{I+i[0,1)}} \EE_{j}\Big({\tfrac{1}{|S_{j}^{(\cdot)}(n)|}\sum_{x\in S_{j}^{(\cdot)}(n)}|G_{x}(z,\Lp(\cdot))|^{p}}\Big) \Leb(I) <\infty.
\end{align*}
Hence, we have almost surely
\begin{align*}
\tfrac{1}{|S_{j}^{\te}(n)|}\sum_{x\in S_{j}^{\te}(n)} \liminf_{\eta\downarrow0}\int_{I}  |G_{x}(E+i\eta,\Lp(\te))|^{p}dE <\infty.
\end{align*}
 Therefore, we have for almost all $\te_{j}\in\Te_{j}$ and for all $x\in\te_{j}$
\begin{align*} \liminf_{\eta\downarrow0}\int_{I} |G_{x}(E+i\eta,\Lp(\te_{j}))|^{p}dE <\infty.
\end{align*}
By a variant of the limiting absorption principle found in  \cite[Theorem~4.1]{Kl3}, (see also \cite[Theorem~2.1]{Si2} and \cite[Theorem~1.4.17]{DK}) the operators $\Lp(\te_{j})$ have purely absolutely continuous spectrum in $I$. \end{proof}

\textbf{Acknowledgements.} The author was inspired by the Workshop 'Structural Probability' in 2008 at the  ESI in Vienna to start this work and he would like to thank the organizers Vadim Kaimanovich and Klaus Schmidt for their generous support. Most valuable comments and suggestions made by Daniel Lenz  on an earlier version of this paper are highly appreciated. The author is indebted to Simone Warzel for  important remarks on the literature and to Ofer Zeitouni for helpful comments. He would also like to thank Wolfgang Spitzer for most stimulating discussions during the Alp Workshop in St. Kathrein in 2009.
While this work was written up the author was enjoying the hospitality of the Hebrew University and he was financially supported by the Israel Science Foundation (Grant no. 1105/10).


\begin{thebibliography}{99}
\bibitem[AN]{AN}  K.B. Athreya, N.E. Ney, \emph{Branching processes}. Die Grundlehren der mathematischen Wissenschaften  196, Springer-Verlag, New York-Heidelberg, 1972.

\bibitem[ASW]{ASW1} M. Aizenman, R. Sims, S. Warzel, \emph{Stability of the absolutely continuous spectrum of random Schr\"odinger operators on tree graphs}, Probability Theory and Related Fields, Volume \textbf{136} (2006),  363-394.

\bibitem[AW]{AW}    M. Aizenman, S. Warzel, \emph{Resonant delocalization for random Schr\"odinger
operators on tree graphs},  to appear in J. European Math. Soc., (2011) arXiv:1104.0969v1.

\bibitem[AV]{AV} T. Antunovi\'c, I. Veseli\'c, \emph{Spectral asymptotics of percolation Hamiltonians on amenable Cayley graphs}. Methods of spectral analysis in mathematical physics,  Oper. Theory Adv. Appl. \textbf{186} (2009), 1--29.

\bibitem[BF]{BF} J. Breuer, R.L. Frank, \emph{Singular spectrum for radial trees}, Rev. Math. Phys.    \textbf{21} (2009), 929--945.

\bibitem[DK]{DK} M. Demuth, M. Krishna \emph{Determining spectra in quantum theory}, Progress in Mathematical Physics \textbf{44}, Birkh\"auser, Boston, 2005.

\bibitem[FHH]{FHH} R. Froese, D. Hasler, F. Halasan, \emph{    Absolutely continuous spectrum for the Anderson model on a product of a tree with a finite graph}, preprint 2010, arXiv:1008.2949.

\bibitem[FHS1]{FHS1} R. Froese, D. Hasler and W. Spitzer,
\emph{Transfer matrices, hyperbolic geometry and absolutely continuous spectrum for some discrete Schrödinger operators on graphs}, J. Funct. Anal. \textbf{230} (2006), 184--221.

\bibitem[FHS2]{FHS2} R. Froese, D. Hasler, W. Spitzer, \textit{Absolutely continuous spectrum for the Anderson model on a tree: a geometric proof of Klein's theorem},   Comm. Math. Phys.  \textbf{269}  (2007), 239--257.

\bibitem[FHS3]{FHS3} R. Froese, D. Hasler, W. Spitzer, \emph{A geometric approach to absolutely continuous spectrum for discrete Schrödinger operators}, in \cite{LSW} (2011), 201--226.


\bibitem[Hal]{Hal} F. Halasan,   \emph{Absolutely Continuous Spectrum for the Anderson Model on Some Tree-like Graphs},  preprint (2009), arXiv:0810.2516.


\bibitem[Har1]{H0} T.E. Harris,  \emph{The theory of branching processes.} Die Grundlehren der Mathematischen Wissenschaften,  \textbf{119} Springer-Verlag, Berlin; Prentice-Hall, Inc., Englewood Cliffs, N.J. 1963.

\bibitem[Har2]{H1} A.B. Harris \emph{1/$\sigma$ expansion for quantum percolation}, Phys. Rev. Lett. \textbf{49} (1982), 486--489.

\bibitem[Har3]{H2} A.B. Harris \emph{Exact Solution of a Model of Localization}, Phys. Rev. Lett. \textbf{49} (1982), 296--299.

\bibitem[HP]{HP} P.D. Hislop and O. Post, \emph{Anderson Localization for radial tree-like random quantum graphs,}  Waves Random Complex Media  \textbf{19}  (2009),  216--261.

\bibitem[Ke]{Kel} M. Keller, \emph{On the spectral theory of operators on trees}, Ph.D. Thesis 2010.

\bibitem[KL]{KL} M. Keller, D. Lenz, \emph{Dirichlet forms and stochastic completeness of graphs and subgraphs}, {J. Reine Angew. Math.}, {to appear}, DOI: 10.1515/CRELLE.2011.122, {arXiv:0904.2985}.

\bibitem[KLW1]{KLW1} M. Keller, D. Lenz, S. Warzel, \emph{On the spectral theory of trees with finite  cone type}, to appear in Israel Journal of Mathematics, (2011) arXiv:1001.3600.

\bibitem[KLW2]{KLW2} M. Keller, D. Lenz, S. Warzel, \emph{Absolutely continuous spectrum for random operators on trees of finite cone type}, to appear in Journal d'Analyse Mathematique, (2011) arXiv:1001.3600.

\bibitem[KM]{KM} W. Kirsch, P. M\"uller, \emph{Spectral properties of the Laplacian on bond-percolation graphs}, Math. Z. \textbf{252}  (2006), 899--916.


\bibitem[Kl1]{Kl1} A. Klein,
\emph{Absolutely continuous spectrum in the Anderson model on the Bethe lattice}, Math. Res. Lett.  \textbf{1} (1994),  399--407.

\bibitem[Kl2]{Kl3} A. Klein, \textit{Extended states in the Anderson model on the Bethe lattice}, Adv. Math. \textbf{133} (1998),  163--184.


\bibitem[KS1]{KlS1} A. Klein, C. Sadel, \emph{{Absolutely continuous spectrum for random operators on the Bethe Strip}}, Mathematische Nachrichten \textbf{285}   (2012), 5--26.

\bibitem[KS2]{KlS2} A. Klein, C. Sadel, \emph{Ballistic Behavior for Random Schrödinger Operators on the Bethe Strip}, J. Spectr. Theory \textbf{1} (2011), 409--442.


\bibitem[LNW]{LW} F. Lehner, M. Neuhauser, W. Woess,
\emph{On the spectrum of lamplighter groups and percolation clusters,} Math. Ann. \textbf{342} (2008), 69--89.

\bibitem[LSW]{LSW} D. Lenz,  F. Sobieczky, W. Woess (eds): \textit{Boundaries and Spectra of Random Walks},  Progress in Probability  \textbf{64}, 2011.


\bibitem[L]{Ly} R. Lyons, \emph{Random walks and percolation on trees}, Ann. Probab. \textbf{18} (1990), 931--958.


\bibitem[MS1]{MS1} P. M\"uller, P. Stollmann, \emph{Spectral asymptotics of the Laplacian on supercritical bond-percolation graphs}, J. Funct. Anal. \textbf{252} (2007), 233--246.

\bibitem[MS2]{MS} P. M\"uller, P. Stollmann, \emph{Percolation Hamiltonians}, in \cite{LSW}, (2011) 23--258.

\bibitem[NW]{NW} T. Nagnibeda, W. Woess, \textit{Random walks on trees with finitely many cone types}. J. Theoret. Probab. \textbf{15}  (2002), 383--422.

\bibitem[S1]{Si} B. Simon \emph{Spectral analysis of rank one perturbations and applications,} Proc. Mathematical Quantum Theory, II: Schrödinger Operators (eds. J. Feldman, R. Froese and L. Rosen), CRM Proceedings and Lecture Notes \textbf{8} (1995),  109--149.

\bibitem[S2]{Si2} B. Simon \emph{Lp norms of the Borel transform and the decomposition of measures}, Proc. Amer. Math. Soc \textbf{123} (1995), 3749-3755

\bibitem[V]{V} I. Veseli\'c, \emph{ Spectral analysis of percolation Hamiltonians},  Math. Ann. \textbf{331} (2005), 841--865.

\bibitem[W]{Woj}  R.K. Wojciechowski,\emph{ Stochastic completeness of graphs}, ProQuest LLC, Ann Arbor, MI, 2008.  Thesis (Ph.D.)--City University of New York.


\end{thebibliography}
\end{document}